\newtheorem{theorem}{Theorem}
\newtheorem{lemma}[theorem]{Lemma}
\newcommand{\BO}[1]{\mathcal{O}\left(#1\right)}
\newcommand{\BT}[1]{\Theta\left( #1\right)}
\newcommand{\predc}[1]{\mathrm{pre}\left(#1\right)}
\newcommand{\dein}{d}
\newcommand{\gra}{G=\left(V,E\right)}
\newcommand{\mymin}[2][]{\min_{#1}\left\{#2\right\}}
\newcommand{\mymax}[2][]{\max_{#1}\left\{#2\right\}}
\newcommand{\mycurl}[1]{\left\{#1\right\}}
\newcommand{\mybrace}[1]{\left(#1\right)}
\newcommand{\myfloor}[1]{\lfloor #1\rfloor}
\newcommand{\boundary}[1]{\mathcal{B}\left(#1\right)}
\DeclareMathOperator*{\argmax}{arg\,max}
\journal{ArXiv
}
\begin{document}

\begin{frontmatter}



\title{A universal bound on the space complexity\\ of Directed Acyclic
  Graph computations}

\author{Gianfranco Bilardi}
\ead{bilardi@dei.unipd.t}
\affiliation{organization={University of Padova, Department of Information Engineering}
    }

\author{Lorenzo De Stefani}
\ead{lorenzo_destefani@brown.edu}
\affiliation{organization={Brown University, Department of Computer Science}}

\begin{abstract}
It is shown that $S(G) = \BO{m/\log_2 m + \dein}$ pebbles are sufficient to pebble any DAG $\gra$, with $m$ edges and
maximum in-degree $\dein$.  It was previously known that $S(G) = \BO{\dein n/\log n}$. The result builds on two novel ideas. The first is the notion of $B$-\emph{budget decomposition} of a DAG $G$, an efficiently computable partition of $G$ into at most $2^{\lfloor \frac{m}{B} \rfloor}$ sub-DAGs, whose cumulative space requirement is at most $B$. The second is the \emph{challenging vertices} technique, which constructs a pebbling schedule for $G$ from a pebbling schedule for a simplified DAG $G'$, obtained by removing from $G$ a selected set of vertices $W$ and their incident edges. This technique also yields improved pebbling upper bounds for DAGs with bounded genus and for DAGs with bounded topological depth.
\end{abstract}







\end{frontmatter}

\pagebreak
\section{Introduction}

The ``\emph{Pebble
Game}''~\cite{friedman1971algorithmic,paterson1970comparative}, has
been proposed to analyze the memory space requirements of straight-line algorithms whose execution can be modeled computation
Directed Acyclic Graphs (DAGs), where vertices represent both inputs
and functional operations and edges represent data dependencies among
them. A pebble on a vertex indicates that the value associated with
that vertex is in memory. A \emph{pebbling schedule} is a sequence of
moves that place or remove pebbles according to certain rules reviewed
below. The ``\emph{pebbling cost}'' of a pebbling schedule is the
maximum number of pebbles simultaneously present on vertices of the
DAG, at any stage of that schedule.  The ``\emph{Space complexity}''
$S\left(G\right)$ (or ``\emph{Pebbling Number}'') of a DAG $G$ is the minimum
pebble cost of a schedule that, with no pebbles initially on the DAG,
eventually places a pebble on each vertex at least once.

The pebble game and its variations (\emph{e.g.}, the Red-Blue~\cite{jia1981complexity}, Black-White~\cite{cook1974storage}, and
two-person pebble game~\cite{Dymond1983SpeedupsOD}) have been widely studied in the literature,
with applications to modeling of non-deterministic computation,
communication complexity, and proof of work.  A connection between the analysis of the pebbling number and the I/O complexity of computational DAGs is developed in~\cite{bilardi2000space, Bilardi2022TheDV}.

Improving on a result by Paul, Tarjan, and Celoni~\cite{paul1976space}, Lengauer and Tarjan introduced the ``\emph{FAST-PEBBLE}" procedure~\cite{lengauer1979upper}. Given any DAG $G$ with $n$ vertices and maximum in-degree $d$,  and given any $S \geq c_1 d \log_2 d n/\log_2 n$, \emph{FAST-PEBBLE} constructs a pebbling schedule for $G$ using at most $S$ pebbles in $T\leq S(c_2d)^{c_3^{(d+1)\frac{n}{S}}}$  moves,  where $c_1>0$, $c_2>1$, $c_3>1$ are sufficiently high constants. In~\cite{loui1979minimum}, Loui presented a different pebbling strategy using $S\geq \left(3d +4\right)\frac{n}{\log n}$ pebbles in $T\leq S2^{\BO{n/S}}$ moves, based on the notion of \emph{layered partition} of a DAG. A \emph{layered partition} with the desired properties is shown to exist but is not explicitly constructed.

This work establishes an $\BO{m/\log m + \dein}$ upper bound
on the space complexity of an arbitrary DAG $\gra$, with $m$ edges and
maximum in-degree $\dein$.  This improves on the
$\BO{\dein
  n/\log n}$ upper bound by Loui~\cite{loui1979minimum}. The improvement can be significant when $\dein$ is
considerably larger than the average in-degree $m/n$.  For example, if
$m=\BT{n}$ and $\dein=\BT{\log_2n}=\BT{\log m}$, our upper bound
becomes $\BO{n/\log n}$, whereas the previous one is a trivial $\BO{n}$. Further, we provide an explicit construction of a pebbling schedule for $G$ using the stated number of pebbles. Our upper bound is existentially tight with respect to the maximum in-degree (trivially)
and also when $m=\BO{n}$ (matching a non trivial lower bound of
\cite{paul1976space}). 

While in many DAGs of interest the maximum degree is constant
and uniform, an interesting exception is provided by Deep Neural Networks (DNNs)
whose architecture is based on ``neurons'', with a number of inputs
dependent on the problem size. When trained based on the ``drop-out''
technique, several of the network weights are typically set to zero,
resulting in the removal of the corresponding inter-layer
connections. Thus, the final DNN DAG, used for inference, will exhibit
an unbalanced in-degree distribution.


With the developed techniques, we derive a $\BO{g\sqrt{n}+d}$ upper bound on the pebbling number for DAGs with genus $g$. In the case of planar DAGs ($g=1$), this improves on the $\BO{\sqrt{n}+d\log n}$ result by Lipton and Tarjan~\cite{liptarapp}.
We also derive a $2\sqrt{ml}-l +1+d$ upper bound on the pebbling number for DAGs of topological depth $l$, improving on the classic $l\left(d-1\right)+1$ upper bound for DAGs with
$m\leq \frac{d^2\mybrace{l-1}^2}{4l}$ edges.


\section{Preliminaries}
A \emph{Directed Acyclic Graph} (DAG) $\gra$ consists of a finite set
of \emph{vertices} $V$ and of a set of directed \emph{edges}
$E\subseteq V \times V$, forming no directed cycle. We say that the edge
$\left(u,v\right)\in E$ is directed from $u$ to $v$, \emph{exits} from
$u$, and \emph{enters} $v$.  Further, $u$ is said to be a
\emph{predecessor} of $v$ and $v$ a \emph{successor} of $u$. The \emph{in-degree} of a vertex $v$, written as $d(v)$, is the number of edges entering it. 
The maximum in-degree $d$ of $G$ is the maximum number of predecessors of any vertex in $V$. 

A vertex $w$ is an \emph{ancestor} (resp., \emph{descendent}) of $v$ if there exists a directed path from (resp. to) $w$ to (resp., from)  $v$. 
We denote as $\mathcal{G}\mybrace{n,m,d}$ the family of DAGs with $|V|\leq n,|E|\leq m$, and vertices with maximum in-degree at most $d$.
The sub-DAG of $G$ \emph{induced by} $V'\subseteq V$ is
$G'=\left(V',\left(V'\times V'\right)\cap E\right)$.
A \emph{topological ordering} $\phi=\mybrace{v_1,v_2,\ldots,v_n}$ of $G$ is a sequence of all its vertices where each vertex appears after all of its predecessors.

The pebble game~\cite{paterson1970comparative} is played on a DAG. An instance of the game, also called a \emph{pebbling schedule}, is a sequence of moves. In one move,
one can either (i) place a pebble on (or \emph{slid} it from one of the predecessors to) one vertex whose immediate predecessors all carry pebbles; or (ii) remove 
 a pebble from one vertex. A \emph{full schedule} starts with no pebbles on the DAG and, for any vertex $v$, includes a move that places a pebble on $v$. The maximum number of pebbles present on the DAG after a move is the \emph{space} of the schedule; the number of moves is the \emph{time}.  In this work, we develop algorithms that, given as input a DAG  $G\in\mathcal{G}\left(n,m,d\right)$, output a full pebbling schedule with guaranteed upper bounds on space and time.

\section{DAG Budget decomposition}\label{sec:decomposition}
Let $\phi= \left(v_1,\ldots,v_{n}\right)$ be a topological ordering of $G=\left(V,E\right)\in \mathcal{G}\mybrace{n,m, d}$. For $i\in\{1,\ldots,|\phi|\}$, the \emph{$i$-th boundary} $\boundary{G,\phi,i}$ is defined as the number of vertices in the prefix $\mybrace{v_1,\ldots,v_i}$ which have a successor in the suffix $\mybrace{v_{i+1},\ldots,v_n}$. The \emph{maximum boundary of }$\mybrace{G,\phi}$ is defined as $\boundary{G,\phi}\triangleq \max_{i\in\mycurl{1,\ldots,|\phi|}}\boundary{G,\phi,i}$. 

\sloppy Given $B\ge0$, a $B$-decomposition of $G$ is an ordered sequence of pairs $\mybrace{\mybrace{G_1, \phi_1},\ldots,\mybrace{G_\ell, \phi_\ell}}$, where $\phi=\mybrace{\phi_1\ldots\phi_\ell}$ is a topological ordering of $G$,  $V_i$ is the set of vertices  in $\phi_i$, $G_i=\mybrace{V_i,E_i}$ is the sub-DAG of $G$ induced by $V_i$, and $\sum_{i=1}^\ell \boundary{G_i, \phi_i} \leq B$. In Lemma~\ref{lem:decompositionspace}, we will show how a $B$-decomposition can be transformed into a pebbling schedule using at most $B+1+\mybrace{d-1}\mybrace{\ell-1}$ pebbles. A procedure to compute a $B$-decomposition, with $\ell$ bounded in terms of $B$, is given next.  Informally, $B$ can be viewed as a \emph{budget} that can be spent for the boundaries of the pieces of the decomposition.
The challenge is to minimize the number of pieces while complying with a given budget.\\

\noindent\textsc{Decompose}$\mybrace{G,\phi,B}$
\begin{itemize}
    \item If $\boundary{G,\phi}\leq B$, then return $\mybrace{G,\phi}$
    
    \item Else, let $i^*=\argmax_{i\in\mycurl{1,2,\ldots,n-1}} \boundary{G,\phi,i}$. Split $\phi$ into the prefix $\phi_p=\mybrace{v_1,\ldots,v_{i^*}}$ and the suffix $\phi_s=\mybrace{v_{i^*+1},\ldots,v_n}$. Let $G_p=(V_p,E_p)$ and $G_s=(V_s,E_s)$ the sub-DAGs of $G$ respectively induced by $V_p=\mycurl{v_1\ldots,v_{i^*}}$ and $V_s=\mycurl{v_{i^*+1},\ldots,v_{n}}$:
    \begin{itemize}
        \item return \textsc{Decompose}$\mybrace{G_p,\phi_p,B\frac{|E_p|}{|E_p|+|E_s|}}$, \textsc{Decompose}$\mybrace{G_s,\phi_s,B\frac{|E_s|}{|E_p|+|E_s|}}$.
    \end{itemize}
\end{itemize}
If \textsc{Decompose} is recursively invoked, then $|E_p|+|E_s| < |E|-B$, that is, more than $B$ edges are \emph{removed} from the subsequent steps. Further, for $B\geq n$, \textsc{Decompose} does not split the DAG and returns a $B$-decomposition with $\left(G,\phi\right)$ as the only element.

\begin{lemma}\label{lem:decomp}
Given $\gra{}\in \mathcal{G}\mybrace{n,m, d}$, any topological ordering of its vertices $\phi$, and a value $B\geq 0$, $\textsc{Decompose}\mybrace{G,\phi,B}$ constructs a $B$-decomposition of $G$  $\mybrace{\mybrace{G_1, \phi_1},\ldots,\mybrace{G_\ell,\phi_{\ell}}}$ with $\ell \leq \mymin{n,2^{\lfloor \frac{m}{B}\rfloor}}$. 
The construction takes $\BO{\mybrace{n+m}\mybrace{\frac{m}{B}+1}}$ time.
\end{lemma}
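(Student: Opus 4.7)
The plan is to verify three properties of the output of \textsc{Decompose}: that it is a valid $B$-decomposition, that $\ell \leq \min\{n, 2^{\lfloor m/B\rfloor}\}$, and that the running time matches the stated bound. The central invariant I would exploit is that the ratio $m'/B'$ of edges to budget at any recursive call drops strictly by more than one whenever a split occurs, while the two child budgets always sum to the parent's budget, since $B\cdot\frac{|E_p|}{|E_p|+|E_s|}+B\cdot\frac{|E_s|}{|E_p|+|E_s|}=B$.

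For validity, I would induct on the recursion depth. The base case is immediate because the procedure returns $(G,\phi)$ only when $\boundary{G,\phi}\leq B$. In the inductive step, the concatenation of a valid $B_p$-decomposition of $G_p$ with a valid $B_s$-decomposition of $G_s$ inherits a topological ordering from $\phi$ (since $\phi_p$ precedes $\phi_s$ in $\phi$ and the sub-DAGs are vertex-induced), and its total boundary is at most $B_p+B_s=B$ by the inductive hypothesis.

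The more delicate part, which I expect to be the main obstacle, is the bound $\ell \leq 2^{\lfloor m/B\rfloor}$. I would analyze the binary recursion tree. Whenever a node actually splits, the number of edges crossing the split satisfies $|E_{\mathrm{cross}}|\geq\boundary{G,\phi,i^*}>B$, so $|E_p|+|E_s|<m-B$, and therefore
\[
\frac{m_{\mathrm{child}}}{B_{\mathrm{child}}}=\frac{|E_p|+|E_s|}{B}<\frac{m}{B}-1.
\]
Iterating, after $k$ levels the ratio at any descendant call is strictly less than $m/B-k$; once this drops to $1$ or below, the boundary of the piece is bounded by its edge count, which is at most its budget, so the recursion halts. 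Hence the tree has depth at most $\lfloor m/B\rfloor$ and at most $2^{\lfloor m/B\rfloor}$ leaves. The complementary bound $\ell\leq n$ follows because each leaf piece contains at least one vertex and the pieces partition $V$.

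For the runtime, I would implement one invocation so that all values $\boundary{G',\phi',i}$ are computed by a single sweep along $\phi'$, maintaining for each prefix vertex a counter of its not-yet-processed successors; this takes $O(|V'|+|E'|)$ time per call, and the argmax then costs $O(|V'|)$. Since at each depth of the recursion tree the pieces have pairwise disjoint vertex and edge sets, the total work at each depth is $O(n+m)$, and multiplying by the depth bound $O(m/B+1)$ yields the claimed $O\left((n+m)(m/B+1)\right)$ running time.
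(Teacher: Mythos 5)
Your proposal is correct and follows essentially the same route as the paper's proof: induction on the recursion tree for validity, the invariant that the edge-to-budget ratio decreases by more than one at each level (the paper states this as $m_H \leq B_H(\frac{m}{B}-k)$) to bound the depth and hence the number of leaves by $2^{\lfloor m/B\rfloor}$, and a per-level $O(n+m)$ sweep with successor counters for the running time. No substantive differences.
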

\begin{proof}
Consider the tree, $\mathcal{T}$, of the recursive calls executed by $\textsc{Decompose}$, whose root corresponds to the invocation   $\textsc{Decompose}\mybrace{G,\phi,B}$. The leaves of  $\mathcal{T}$ correspond
to the sub-DAGS of $G$, returned as the parts of the $B$-decomposition of $\mybrace{G,\phi}$. We first prove,  by induction on the number $j$ of levels of the tree, that $\mybrace{\mybrace{G_1, \phi_1},\ldots,\mybrace{G_\ell, \phi_\ell}}$ returned by $\textsc{Decompose}\mybrace{G,\phi,B}$ is indeed a $B$-decomposition of $G$. 
  
The base case, $j{=}1$, only occurs when  $\boundary{G,\phi}\leq B$ and $\textsc{Decompose}$ outputs $\mybrace{\mybrace{G,\phi}}$, which is trivially a 
$B$-decomposition of $G$.  For ${j>1}$, consider the subtrees of $\mathcal{T}$ rooted at the two children of the root, which correspond to the invocations  \textsc{Decompose}$\mybrace{G_p=\mybrace{V_p,E_p},\phi_p,B\frac{|E_p|}{|E_p|+|E_s|}}$ and \textsc{Decompose}$\mybrace{G_s=\mybrace{V_s,E_s},\phi_s,B\frac{|E_s|}{|E_p|+|E_s|}}$. Each of these subtrees has at most $j{-}1$ levels.
By definition, $\phi=\phi_p\phi_s$, $V_p$ and $V_s$ partition $V$, and $B\frac{|E_p|}{|E_p|+|E_s|}+B\frac{|E_s|}{|E_p|+|E_s|}= B$.  By the inductive hypothesis, the sequences $\mybrace{\mybrace{G_{p,1}, \phi_{p,1}},\ldots,\mybrace{G_{p,\ell_p}, \phi_{p,\ell_p}}}$ and  $\mybrace{\mybrace{G_{s,1}, \phi_{s,1}},\ldots,\mybrace{G_{s,\ell_s}, \phi_{p,\ell_s}}}$ returned by the calls respectively are a $\mybrace{B\frac{|E_p|}{|E_p|+|E_s|}}$-decomposition of $G_p$ and a $\mybrace{B\frac{|E_s|}{|E_p|+|E_s|}}$-decomposition of $G_s$. By construction, $\ell=\ell_p+\ell_s$. Further, by the inductive hypothesis,  $\phi_p = \phi_{p,1}\ldots,\phi_{p,\ell_p}$, $\phi_s = \phi_{s,1}\ldots,\phi_{s,\ell_s}$, with $$\sum_{i=1}^{\ell_p}\boundary{G_{p,i},\phi_{p,i}}+\sum_{i=1}^{\ell_s}\boundary{G_{s,i},\phi_{s,i}}\leq B\frac{|E_p|}{|E_p|+|E_p|}+B\frac{|E_s|}{|E_p|+|E_s|}=B.$$

We now prove the upper bound on $\ell$. By construction, each element of a $B$-decomposition includes at least one node; hence, $\ell\leq n$. Further, any decomposition with $n$ parts each with a single vertex is a $B$ decomposition for any $B\geq 0$, since, by definition, for a DAG $G'$ with a single vertex, $\boundary{G',\phi'}=0$.

Let a node at level $k$ of  $\mathcal{T}$ (the root being at level $0$) correspond to the invocation $\textsc{Decompose}\mybrace{H,\phi_H,B_H}$ where $H$ is a sub-DAG of $G$ with $m_H$ edges. We claim that 
\begin{equation}\label{eq:leaves}
m_H \leq B_H\mybrace{\frac{m}{B}-k}.    
\end{equation}
The argument is by induction on $k$. For $k=0$, Inequality~\ref{eq:leaves} clearly holds, as $H=G$, $m_H=m$, $\phi_H=\phi$, and $B=B_H$. If $k>0$, let $\mybrace{H',B'}$ and $\mybrace{H'',B''}$ denote the children of a node at level $k{-}1$, corresponding to the invocation $\textsc{Decompose}\mybrace{H,\phi_H,B_H}$. Let $m'$ and $m''$ be the number of edges of $H'$ and $H''$, respectively. By the definition of $\textsc{Decompose}$ and by the inductive hypothesis, we have:
\begin{equation*}
    m'+m''< m_H-B_H\leq B_H\mybrace{\frac{m}{B}-(k{-}1)} -B_H =
    B_H\mybrace{\frac{m}{B}-k},\end{equation*}
where the second passage follows from the inductive hypothesis. Then:

\begin{align*}
    \frac{m'}{B'} {=} \frac{m''}{B''}&= \frac{m'+m''}{B_H} < \frac{m}{B}-k.
\end{align*}
This establishes~\eqref{eq:leaves}. Considering that $0 \leq m/B-\lfloor m/B \rfloor <1$, Inequality~\eqref{eq:leaves} implies that nodes at level $\lfloor m/B\rfloor$ (if any) correspond to invocations of $\textsc{Decompose}$ for sub-DAGs for which the number of edges, which is an upper bound for the maximum boundary, is lower than or equal to the assigned budget. By construction, these calls will not trigger further invocations of $\textsc{Decompose}$. Thus, binary tree $\mathcal{T}$ has at most $\lfloor m/B \rfloor {+}1$ levels, hence at most
$\ell\leq 2^{\lfloor\frac{m}{B} \rfloor}$ leaves, each corresponding to a part
of the $B$-decomposition. 

Finally, let us analyze the running time of $\textsc{Decompose}$.
Given $G$ and $\phi=\mybrace{v_1\ldots v_n}$, the lowest index  $i\in\mycurl{1,\ldots,n-1}$ such that $\boundary{G,\phi,i}= \boundary{G,\phi,i^*}$
can be determined by the following procedure, which scans $\phi$ one vertex at a time, keeping track at each step of the set of vertices in the boundary, $W$, and of the index $i^*$ for which $W$ had the largest size thus far, $B^*$. 
Initially, $W=\emptyset$, $i^*=1$, and $B^*=0$. For $i {=} 1,\ldots,n{-}1$, at the $i$-th step,  if vertex $v_i$ has successors, then it is added to the boundary and a \emph{counter} associated with $v_i$ is initialized with the number of its successors. Then, the counter associated with each predecessor of $v_i$ is decremented by one. If a counter reaches zero, then the corresponding vertex is removed from the boundary. At this stage, if $|W| > B^*$, then $B^*$ is set to $|W|$ and $i^*$ to $i$. It is easy to see that the described procedure does a constant amount of work for each vertex and for each edge, thus running in time $\BO{n+m}$. Considering that (a) no more than $n$ nodes and $m$ edges overall are involved in the calls at each level of tree $\mathcal{T}$ and (b) that the number of levels is at most $\lfloor \frac{m}{B} \rfloor +1$, the stated bound on the running time of $\textsc{Decompose}$ remains established.
\end{proof}

\section{From budget decomposition to pebbling schedules}
In this and the following section, we show how to pebble any DAG $G \in \mathcal{G}\left(n,m,\dein\right)$ with $\mathcal{O}\left(m/\log m +\dein{}\right)$ pebbles. Our approach can be viewed as a depth-first execution of the DAG
according to a decomposition into sub-DAGs (obtained from a topological
sorting). The total space requirement of the sub-DAGs upper bounds the
space needed, at any given stage, to store the internal state of the
sub-Dag executions. The number of sub-DAGs, multiplied by the degree,
upper bounds the space needed to store data produced by a ``source''
sub-DAG, but not already consumed by the "destination" sub-DAG. Budget
decompositions aim to keep the tradeoff between these two terms
under control.

This and comparable results \cite{loui1979minimum,lengauer1979upper,hopcroft1977time} are of interest for $m/\log_2 m <n$, as $n$ pebbles trivially suffice to pebble any $n$-vertex DAG. The next lemma introduces a simple building block for pebbling schedules.
\begin{lemma}\label{lem:lem1} 
 Given a DAG $\gra{}\in\mathcal{G}\mybrace{n,m,d}$, let $\phi=\mybrace{v_1,v_2,\ldots,v_n}$ be a topological ordering of its vertices. There exists a pebbling schedule for $G$ with $T=2n$ moves, using $S \leq \boundary{G,\phi}+1$ pebbles.
\end{lemma}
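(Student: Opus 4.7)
The plan is to build the schedule that scans the vertices in the topological order $\phi$, placing one pebble per vertex in sequence and aggressively removing pebbles whose vertex has no further use. First, I would describe the schedule explicitly: for $i=1,2,\ldots,n$, place a pebble on $v_i$, and then, for every vertex $v_j$ with $j \leq i$ all of whose successors lie in $\mycurl{v_1,\ldots,v_i}$, remove the pebble from $v_j$. Since every predecessor of $v_i$ precedes it in $\phi$ and has $v_i$ as a successor in the suffix $\mycurl{v_i,\ldots,v_n}$, I would need to argue that every such predecessor is pebbled at the moment step $i$ begins, so that the placement of the new pebble on $v_i$ is legal under the pebbling rules.

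The main step is to establish, by induction on $i$, the invariant that right after the removal phase of step $i$ the pebbled vertices form exactly the set counted by $\boundary{G,\phi,i}$, i.e.\ those vertices of $\mycurl{v_1,\ldots,v_i}$ with at least one successor in $\mycurl{v_{i+1},\ldots,v_n}$. The base case $i=0$ corresponds to the empty initial configuration and to $\boundary{G,\phi,0}=0$. For the inductive step, the pebbles present at the start of step $i$ are exactly the $(i{-}1)$-th boundary set; this set contains every predecessor of $v_i$, because each predecessor has $v_i$ itself as a successor in $\mycurl{v_i,\ldots,v_n}$, so placing a pebble on $v_i$ is legal. After the subsequent removals prescribed by the schedule, the pebbled set becomes precisely the $i$-th boundary by construction, closing the induction.

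From this invariant, the peak number of pebbles during step $i$ is attained right after placing $v_i$ but before any removal, amounting to $\boundary{G,\phi,i-1}+1$ pebbles; maximizing over $i$ yields $S \leq \boundary{G,\phi}+1$, as claimed. For the move count, each vertex contributes exactly one placement move and at most one removal move, so $T \leq 2n$. I expect the only subtle point to be checking that the removals at the end of step $i$ do not jeopardize legality of any future placement: this follows because a pebble removed at step $i$ is removed precisely when its vertex has no successor of index greater than $i$, so it is never again required as a predecessor of a subsequent $v_k$ with $k>i$.
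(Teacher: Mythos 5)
Your proposal is correct and follows essentially the same route as the paper: the same topological-order scan with a placement phase for $v_i$ followed by removal of pebbles from vertices with no successors in the suffix, with the pebble count controlled by the observation that the configuration between steps is exactly the $i$-th boundary set. The only cosmetic difference is that you phrase the configuration claim as an explicit induction, which the paper leaves implicit.
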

\begin{proof} Consider the \emph{topological order pebbling strategy}, defined as the concatenation of sub-schedules $\mathcal{D}=\mathcal{D}_1^+\mathcal{D}_1^-
\mathcal{D}_2^+\mathcal{D}_2^- \ldots \mathcal{D}_n^+\mathcal{D}_n^-$, 
where $\mathcal{D}_i^+$ places a pebble on $v_i$ and $\mathcal{D}_i^-$
removes pebbles from any vertex in $\{v_1,\ldots,v_i\}$ with no successors
in $\{v_{i+1},\ldots,v_n\}$. Correctness: since $\phi$ is a topological ordering of $G$, all predecessors of  $v_i$ carry a pebble just before $\mathcal{D}_i^+$. $T=2n$ moves: each vertex is pebbled and unpebbled just once. Number of pebbles: by definition of $B$-decomposition, for each $i$, both just before $\mathcal{D}_i^+$ and just after $\mathcal{D}_i^-$, at most $\boundary{G,\phi}$ pebbles reside on vertices of $G$; only one additional pebble is used to pebble $v_i$.
\end{proof}
We now show how to build a schedule from a decomposition of the DAG.
\begin{lemma}\label{lem:decompositionspace}
Given $\gra{}\in\mathcal{G}\mybrace{n,m,d}$, let $\phi$ be a topological ordering of its vertices, and let $\mybrace{\mybrace{G_1=\mybrace{V_1,E_1},\phi_1},\ldots,\mybrace{G_\ell=\mybrace{V_\ell,E_\ell},\phi_\ell}}$ be a decomposition of $G$ with $\phi=\phi_1\phi_2\ldots\phi_\ell$, $V_i=\phi_i$, and $E_i=\mybrace{V_i\times V_i}\cap E$. Then, $G$ can be pebbled using
$S{\leq}\sum_{i=1}^\ell \boundary{G_i,\phi_i} +1+\mybrace{d-1}\mybrace{\ell-1}$ pebbles.
\end{lemma}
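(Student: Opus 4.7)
My plan is to construct an explicit pebbling schedule by processing the sub-DAGs $G_1,G_2,\ldots,G_\ell$ in the order given by $\phi=\phi_1\phi_2\ldots\phi_\ell$. Within each $G_i$, I would apply the topological-order strategy of Lemma~\ref{lem:lem1}, modified so that a vertex $u$ retains its pebble as long as $u$ has an un-pebbled successor anywhere in $G$, not merely inside its own sub-DAG. Pebble sliding would be used whenever the first vertex of a new sub-DAG is placed: the pebble is obtained by moving one from an external predecessor, thereby avoiding the $+1$ overhead of a fresh placement at each sub-DAG transition.

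The core of the analysis is an inductive bound on the pebble count. At any moment during the processing of $V_i$, I would classify the pebbled vertices by sub-DAG: (a) pebbles on the currently active $V_i$, which by Lemma~\ref{lem:lem1} applied to $G_i$ are at most $\boundary{G_i,\phi_i}+1$; and (b) pebbles retained on the previously-processed $V_j$ with $j<i$, which I would show by induction on $i$ to be cumulatively bounded by $\sum_{j<i}\boundary{G_j,\phi_j}+(d-1)(i-1)$. The inductive step rests on the observation that when processing of $V_i$ begins, the first vertex of $V_i$ has at most $d$ external predecessors pebbled, exactly one of which is consumed by the sliding move at the transition, so at most $d-1$ new persistent pebbles are added as external carryover on top of the $\boundary{G_{i-1},\phi_{i-1}}$ terminal state of the previous phase.

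Taking $i=\ell$ in the invariant then yields the claimed total bound $\sum_i\boundary{G_i,\phi_i}+1+(d-1)(\ell-1)$.

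The main obstacle is rigorously establishing the inductive invariant in (b) at every discrete moment of the schedule, especially at the exact instant of each sub-DAG transition: the retained count on the just-finished $V_{i-1}$ may momentarily spike before the sliding operation for $V_i$ brings it back down. Handling this requires a careful specification of the sequencing of pebble moves (for instance, performing the sliding step for $V_i$'s first vertex immediately after the final placement on $V_{i-1}$, so the momentary $+1$ slack is attributed once, either to the current sub-DAG or to the transitional carryover) and an argument that the aggregate pebble count across all sub-DAGs stays within the target bound throughout, including when multiple earlier sub-DAGs contribute external predecessors to the same later vertex.
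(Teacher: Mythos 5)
Your overall strategy---a single left-to-right pass over $G_1,\ldots,G_\ell$ in which every vertex keeps its pebble as long as it has an unpebbled successor anywhere in $G$---cannot achieve the stated bound, and the failure is in invariant (b), not in the sequencing details you flag as the ``main obstacle.'' The quantity $\boundary{G_j,\phi_j}$ counts only vertices of a prefix of $\phi_j$ that have successors \emph{inside} $G_j$; it says nothing about how many vertices of $V_j$ have successors in later parts $V_k$ with $k>j$, and those are precisely the vertices your schedule must keep pebbled after $G_j$ is finished. Concretely, let $V_1=\mycurl{a_1,\ldots,a_k}$ with no edges among them, and $V_2=\mycurl{b_1,b_2}$, where $b_1$ has predecessors $a_1,\ldots,a_{k/2}$ and $b_2$ has predecessors $a_{k/2+1},\ldots,a_k$. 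Then $\boundary{G_1,\phi_1}=\boundary{G_2,\phi_2}=0$, $d=k/2$, $\ell=2$, and the lemma promises $S\le 1+(d-1)=k/2$; yet your schedule holds all $k$ pebbles on $V_1$ at the moment $V_1$ is exhausted, since every $a_i$ still has an unpebbled successor. The ``$d-1$ per transition'' accounting covers only the external predecessors of a single target vertex, whereas the distinct external predecessors accumulated over all of $V_i$ can number up to $d|V_i|$.

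The paper's proof takes the opposite route: it does \emph{not} retain cross-part pebbles, it recomputes them on demand. By induction on $\ell$ it builds an emptying schedule $\bar{\mathcal{C}}$ for the sub-DAG induced by $\cup_{i<\ell}V_i$; then, for each vertex $u$ of $V_\ell$ in turn, it replays a prefix of $\bar{\mathcal{C}}$, modified so as to leave pebbles on the at most $d$ predecessors of $u$ lying in earlier parts, slides one of them onto $u$, and then removes the extra pebbles to restore the configuration before continuing. This is why only $d-1$ extra pebbles per level of nesting are ever needed (yielding the $(d-1)(\ell-1)$ term), and it is also why the accompanying time bound in Lemma~\ref{lem:decompositiontime} is of order $(n/\ell)^{\ell}$ rather than linear: the space bound of this lemma is inherently purchased with recomputation, which your single-pass schedule forgoes. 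Any correct proof must either adopt this recompute-on-demand structure or replace $\boundary{G_j,\phi_j}$ in the target bound by a quantity that actually controls the cross-part out-boundary of $V_j$.
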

\begin{proof}
We inductively construct an \emph{emptying} pebbling schedule $\mathcal{C}$ for $G$, that is, one which leaves no pebbles on the graph, at the end.
    For $\ell{=}1$, $\mathcal{C}$  is the (emptying) schedule introduced in Lemma~\ref{lem:lem1}. For $\ell{=}1$, the bound on $S$ stated in this lemma equals that in Lemma~\ref{lem:lem1}.
    
For $\ell>1$, we define $\mathcal{C}$ inductively in terms of a schedule $\bar{\mathcal{C}}$ for the sub-DAG induced by 
$\cup_{i=1}^{\ell-1} V_i$, using $\bar{S}{\leq}\sum_{i=1}^{\ell-1}\boundary{G_i,\phi_i}{+}1{+}\mybrace{d{-} 1}\mybrace{\ell{-}2}$ pebbles
and of the schedule $\mathcal{D}$ for $(G_{\ell},\phi_{\ell})$ given by Lemma~\ref{lem:lem1}, using $S^{\mathcal{D}}\leq \boundary{G_{\ell},\phi_{\ell}}{+}1$ pebbles. For  $u\in V_{\ell}$, let $P(u)=\predc{u}\cap \cup_{i=1}^{\ell-1} V_i$.

If $P(u)=\emptyset$ for every $u \in V_{\ell}$, then  we define $\mathcal{C}{=}\bar{\mathcal{C}}\mathcal{D}$, which  pebbles all of $G$, is an emptying schedule, and uses $\max(\bar{S},S^{\mathcal{D}})\leq S$ pebbles.

If $P(u){\not=}\emptyset$ for at least one $u \in V_{\ell}$, $\mathcal{C}$ is obtained by a suitable interleaving of the moves of $\mathcal{D}$ with moves from suitably modified versions of $\bar{\mathcal{C}}$. Specifically, letting $\phi_{\ell}=(u_1,u_2, \ldots, u_s)$ and $\mathcal{D}=\mathcal{D}_1^+\mathcal{D}_1^-
\mathcal{D}_2^+\mathcal{D}_2^- \ldots \mathcal{D}_s^+\mathcal{D}_s^-$,
we construct $\mathcal{C}$ as
\[
\mathcal{C} = (\bar{\mathcal{C}}_1^+ \mathcal{D}_1^+\mathcal{D}_1^-\bar{\mathcal{C}}_1^-\bar{\mathcal{C}}_1')(\bar{\mathcal{C}}_2^+ \mathcal{D}_2^+\mathcal{D}_2^- \bar{\mathcal{C}}_2^-\bar{\mathcal{C}}_2') \ldots (\bar{\mathcal{C}}_s^+ \mathcal{D}_s^+\mathcal{D}_s^-\bar{\mathcal{C}}_s^-\bar{\mathcal{C}}_s'),
\]
where the sub-schedules $\bar{\mathcal{C}}_i^+$, $\bar{\mathcal{C}}_i^-$, and $\bar{\mathcal{C}}_i$ are defined next, for $i{=}1,2, \ldots, s$. 
Partition $\bar{\mathcal{C}}$ as $\bar{\mathcal{C}} {=} \bar{\mathcal{C}}_i \bar{\mathcal{C}}'_i$, where
$\bar{\mathcal{C}}_i$ is the shortest prefix of $\bar{\mathcal{C}}$ that pebbles all vertices of $P(u_i)$. $\bar{\mathcal{C}}_i$ is well defined, since $\bar{\mathcal{C}}$ does pebble all vertices in $\cup_{i=1}^{\ell-1} V_i \supseteq P(u_i)$. (If $P(u_i)$ is empty, then so is $\bar{\mathcal{C}}_i$.) 

Then, $\bar{\mathcal{C}}_i^+$ is obtained by eliminating from $\bar{\mathcal{C}}_i$, for each vertex $w \in P(u_i)$, the last move (if any) that removes a pebble from $w$. This guarantees that, at the end of $\bar{\mathcal{C}}_i^+$, all vertices in $P(u_i)$ carry pebbles.

Moreover, $\bar{\mathcal{C}}_i^-$ is defined as a sequence of moves, which remove the pebble from each vertex in $P(u_i)$ that does not carry a pebble at the end of $\bar{\mathcal{C}}_i$, when $\bar{\mathcal{C}}$ is executed. Thus, $\bar{\mathcal{C}}_i^+\bar{\mathcal{C}}_i^-$ yields, on $\cup_{i=1}^{\ell-1} V_i$, the same configuration of pebbles as $\bar{\mathcal{C}}_i$, whence $\bar{\mathcal{C}}_i^+ \bar{\mathcal{C}}_i^- \bar{\mathcal{C}}_i'$ is a legal schedule that pebbles all vertices of $\cup_{i=1}^{\ell-1} V_i$.

Clearly, $\mathcal{C}$ does pebble all of $G$ and is emptying.
To establish the upper bound on the pebble requirement, we observe that the number of pebbles on $V_{\ell}$ never exceeds $S^{\mathcal{D}}$ and the number of pebbles on $\cup_{i=1}^{\ell-1} V_i$ never exceeds $\bar{S}+(d{-}1)$, where the term $(d{-}1)$ accounts for the maximum number of vertices that may hold a pebble at some stage of $\bar{\mathcal{C}}_i^+$, while not holding a pebble in the corresponding stage of $\bar{\mathcal{C}}_i$.
When pebbling a $u_i$ for which $P(u_i) \not= \emptyset$, a pebble is slid from one
of its predecessors in $P(u_i)$.
Therefore, the number of pebbles on $V=\cup_{i=1}^{\ell} V_i$ is
at most $\bar{S}+(d{-}1){+}S^{\mathcal{D}}{-}1 = \sum_{i=1}^{\ell-1} \boundary{G_i,\phi_i} {+}1{+}\mybrace{d-1}\mybrace{\ell{-2}}+(d{-}1){+}\boundary{G_{\ell},\phi_{\ell}}{+}1-1$, which simplifies to the claimed bound for $S$.
\end{proof}

A DAG $G$ with $d=1$ can be straightforwardly pebbled with $S=1$  pebbles, in $T=2n$ moves, by decomposing any of its topological orderings into $1$-node segments.  Below, we develop bounds on space, time, and their tradeoffs, based on schedules built using a DAG decomposition. Results are often stated for $d \geq 2$
 and $m \geq 2$, in light of the preceding observation.

\begin{lemma}\label{lem:decompositiontime}
Given $\gra{}\in \mathcal{G}\mybrace{n,m,d}$, with $d\geq 2$, let $\phi$ be a topological ordering of its vertices. Let $\mybrace{\mybrace{G_1=\mybrace{V_1,E_1},\phi_1},\ldots,\mybrace{G_\ell=\mybrace{V_\ell,E_\ell},\phi_\ell}}$ be a decomposition of $G$ such that
$\phi=\phi_1\phi_2\ldots\phi_\ell$, $V_i=\phi_i$, and $E_i=\mybrace{V_i\times V_i}\cap E$. There exists a schedule $\mathcal{C}$ which pebbles $G$ in $T \leq 2\frac{d}{d-1}\mybrace{\frac{n}{\hat{\ell}}}^{\hat{\ell}}$ moves, with $\hat{\ell} = \mymin
{\ell,\lceil n/d \rceil,n/e}$, using
$
S \leq \sum_{i=1}^\ell \boundary{G_i,\phi_i} +1+\mybrace{d-1}\mybrace{\ell-1}
$ pebbles.
\end{lemma}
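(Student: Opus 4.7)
The space bound coincides with that of Lemma~\ref{lem:decompositionspace}; my plan is to use the same schedule $\mathcal{C}$ and analyze its move count $T$ via a recurrence. Write $s_j\triangleq|V_j|$ and let $T_j$ denote the length of the schedule for the sub-DAG induced by $V_1\cup\cdots\cup V_j$; Lemma~\ref{lem:lem1} gives the base case $T_1\leq 2s_1$.

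For the inductive step, the key observation I would establish is that in the construction from the proof of Lemma~\ref{lem:decompositionspace}, the pair $(\bar{\mathcal{C}}_i^+,\bar{\mathcal{C}}_i^-)$ deletes from $\bar{\mathcal{C}}_i$ at least as many removal moves as it adds: any $w\in P(u_i)$ missing a pebble at the end of $\bar{\mathcal{C}}_i$ necessarily has, as its last touch in $\bar{\mathcal{C}}_i$, a removal move, which is deleted in forming $\bar{\mathcal{C}}_i^+$ and re-enacted exactly once in $\bar{\mathcal{C}}_i^-$. Hence $|\bar{\mathcal{C}}_i^+\bar{\mathcal{C}}_i^-\bar{\mathcal{C}}_i'|\leq|\bar{\mathcal{C}}|=T_{j-1}$. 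Since the $\mathcal{D}_i^\pm$ portions contribute $|\mathcal{D}|\leq 2s_j$ in total by Lemma~\ref{lem:lem1}, summing over $i=1,\ldots,s_j$ yields
\[
T_j \;\leq\; s_j\, T_{j-1} + 2 s_j.
\]

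Unrolling gives $T_\ell\leq 2\sum_{j=1}^\ell\prod_{i=j}^\ell s_i = 2\prod_{i=1}^\ell s_i\bigl(1+\sum_{k=1}^{\ell-1}(s_1\cdots s_k)^{-1}\bigr)$. When $s_i\geq d$ for every $i$, the tail sum is bounded by $\sum_{k\geq 1}d^{-k}=1/(d-1)$, yielding $T_\ell\leq \tfrac{2d}{d-1}\prod_{i=1}^\ell s_i$. By AM-GM, $\prod s_i\leq(n/\ell)^\ell$; and since $f(x)=(n/x)^x$ is increasing on $(0,n/e]$ and maximized at $x=n/e$, one has $(n/\ell)^\ell\leq (n/\hat{\ell})^{\hat{\ell}}$ for any $\hat{\ell}\in[\ell,n/e]$.

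The main obstacle is that a given decomposition need not satisfy $s_i\geq d$ everywhere, nor $\ell\leq n/e$. My plan to overcome this is to coarsen the decomposition before invoking the recurrence, by iteratively merging adjacent pieces $(G_i,\phi_i)$ and $(G_{i+1},\phi_{i+1})$ into the sub-DAG induced by $V_i\cup V_{i+1}$ with ordering $\phi_i\phi_{i+1}$. Merging can only decrease $\sum\boundary{G_i,\phi_i}$ (since some successors become internal) and reduces $\ell$ by one, so the space bound of Lemma~\ref{lem:decompositionspace} is preserved. Merging greedily until every piece has size at least $d$ and at most $n/e$ pieces remain produces an effective decomposition with at most $\hat{\ell}=\min\{\ell,\lceil n/d\rceil,n/e\}$ pieces and $s_i\geq d$ throughout, to which the preceding recurrence analysis directly applies and yields the claimed time bound.
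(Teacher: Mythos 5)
Your recurrence $T_j \leq s_j T_{j-1} + 2s_j$, its unrolling to $2\sum_{j}\prod_{i\geq j}s_i$, the geometric-series bound using $s_i\geq d$, and the AM--GM step all match the paper's argument, and your preprocessing idea (merge adjacent pieces so every piece has at least $d$ vertices) is also the paper's. The gap is in your justification that merging preserves the space bound of Lemma~\ref{lem:decompositionspace}. You claim the merge ``can only decrease $\sum_i\boundary{G_i,\phi_i}$ since some successors become internal,'' but this is backwards: $\boundary{G_i,\phi_i}$ counts only successors \emph{inside} $G_i$, so an edge from $V_i$ to $V_{i+1}$ contributes to neither $\boundary{G_i,\phi_i}$ nor $\boundary{G_{i+1},\phi_{i+1}}$, yet after the merge it \emph{does} contribute to the boundary of the combined piece (e.g.\ at the cut separating $\phi_i$ from $\phi_{i+1}$). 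If $G_i$ has no internal edges but every vertex of $V_i$ has a successor in $V_{i+1}$, the boundary sum jumps from $\boundary{G_{i+1},\phi_{i+1}}$ to at least $|V_i|$. The correct accounting, which is what the paper uses, is that a merge increases the combined boundary by at most $|V_i|$; hence, provided you only ever merge a piece with $|V_i|\leq d-1$ into its right neighbor, the increase of at most $d-1$ is exactly compensated by the decrease of $d-1$ in the term $(d-1)(\ell-1)$.

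The second, related problem is your extra greedy merging ``until at most $n/e$ pieces remain.'' There you may be forced to merge two pieces that both already have at least $d$ vertices, and by the above the boundary sum can then grow by up to $|V_i|\gg d-1$ per merge, which is no longer compensated; this would genuinely break the stated space bound (take $d=2$ and all pieces of size $2$, so that roughly $n/2-n/e$ such merges are needed). Fortunately this step is unnecessary. Letting $\ell^*$ be the number of pieces after the size-$d$ merging alone, you have $\ell^*\leq\min\{\ell,\lceil n/d\rceil\}$, and you only need $(n/\ell^*)^{\ell^*}\leq(n/\hat{\ell})^{\hat{\ell}}$: if $\hat{\ell}=n/e$ this holds because $f(x)=(n/x)^x$ attains its global maximum at $x=n/e$; otherwise $\ell^*\leq\hat{\ell}=\min\{\ell,\lceil n/d\rceil\}\leq n/e$ and monotonicity of $f$ on $(0,n/e]$ applies. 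Dropping the extra merging and replacing your monotonicity claim with this two-case argument repairs the proof and brings it in line with the paper's.
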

\begin{proof}
From the given decomposition of $G$, we obtain $\mybrace{\mybrace{H_1=\mybrace{W_1,F_1},\psi_1},\ldots,{\mybrace{H_{\ell^*}=\mybrace{W_{\ell^*},F_{\ell^*}},\psi_{\ell^*}}}}$, a decomposition where, for $i < \ell^*$, $|W_i| \geq d$, and
$
\sum_{i=1}^{\ell^*} \boundary{H_i,\psi_i} +1+\mybrace{d-1}\mybrace{\ell^*-1} \leq 
\sum_{i=1}^{\ell} \boundary{H_i,\psi_i} +1+\mybrace{d-1}\mybrace{\ell-1}.
$
Specifically, we iteratively merge the leftmost set with fewer than $d$ nodes with its immediate successor (if any) in the partition, until all sets, with the possible exception of the rightmost one, have at least $d$ nodes. Each merge increases the boundary size by at most $(d-1)$, an increment compensated by the decrement in the number of sets of the partition, in the overall bound.
Let now $\mathcal{C}$ be the schedule constructed in the proof of 
Lemma~\ref{lem:decompositionspace}, based on the new decomposition. Clearly,
the number of pebbles $S$ used by $\mathcal{C}$ satisfies the stated bound.

Next we show, by induction on $\ell^*$, that  $\mathcal{C}$ completes in at most $2\sum_{i=1}^{\ell^*}\prod_{j=i}^{\ell^*} |W_j|$ moves. For $\ell^*=1$, $W_1=V$ and $\mathcal{C}$ completes in $2|V|=2n$ moves, according to Lemma~\ref{lem:lem1}.
For $\ell^*>1$, recall that $\mathcal{C}$ pebbles (and, eventually, unpebbles) each vertex in $W_{\ell*}$ one single time. For any of these vertices with predecessors in $\cup_{i=1}^{\ell^*-1}W_i$, an execution of $\mathcal{C'}$ takes place which, by inductive hypothesis, completes within $2\sum_{i=1}^{\ell^*-1}\prod_{j=i}^{\ell^*-1}|W_j|$ moves. Hence, 
\begin{align*}
  T(\mathcal{C}) \leq |W_{\ell^*}|\mybrace{2\sum_{i=1}^{\ell^*-1}\prod_{j=i}^{\ell^*-1}|W_j|}+2|W_{\ell^*}| 
  &= 2\sum_{i=1}^{\ell^*}\prod_{j=i}^{\ell^*}|W_j| 
   = 2\prod_{j=1}^{\ell^*}|W_j| \sum_{i=1}^{\ell^*}\prod_{h=1}^{i-1}|W_h|^{-1},
\end{align*}
where we have established the inductive step and rewritten the result so that,
bounding $\prod_{j=1}^{\ell^*}|W_j|$ via the 
\emph{Arithmetic Mean/Geometric Mean} inequality
(given $\sum_{j=1}^{\ell^*}|W_j|=n$) as well as considering that, for $h<\ell^*$, $|W_h|^{-1} \leq d^{-1}$, yields:
\begin{align*}
  T(\mathcal{C}) \leq 
  2 \left(\frac{n}{\ell^*}\right)^{\ell^*} \sum_{i=1}^{\ell^*} d^{-(i-1)}    
  <2\frac{d}{d-1}\left(\frac{n}{\ell^*}\right)^{\ell^*}
  \leq 2\frac{d}{d-1}\left(\frac{n}{\hat{\ell}}\right)^{\hat{\ell}}.  \end{align*}
  Toward the last inequality, we recall that $\hat{\ell} = \mymin{\ell,\lceil n/d \rceil,n/e}$  and observe that (i) for $x \geq 0$, the function $f(x)=\left(n/x\right)^x$ increases with $x$ for $x \leq n/e$ and decreases elsewhere; (ii) $\ell^* \leq \ell$, and (iii) $\ell^* \leq \lceil n/d \rceil$.\\
  
\end{proof}
As shown in Lemma~\ref{lem:decomp}  for any DAG we can obtain in polynomial time a $B$-decomposition with at most $2^{\myfloor{\frac{m}{B}} }$ parts. Since $\mymax{e,\frac{n}{\ell}}^\ell$ is an non-decreasing function of $\ell$, by Lemma~\ref{lem:decompositiontime}:

\begin{lemma}\label{lem:boundbdecompo}
Any DAG $\gra{}\in \mathcal{G}\mybrace{n,m,d}$ can be pebbled in $T \leq 2\frac{d}{d-1}\mymax{e,\frac{n}{2^{\myfloor{\frac{m}{B}}}}}^{2^{\myfloor{\frac{m}{B}}}}$ moves using 
$
S \leq B+1+\mybrace{d-1}\mybrace{2^{\myfloor{\frac{m}{B}}}-1} 
$ pebbles,
where $B\geq 0$.
\end{lemma}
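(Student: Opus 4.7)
The plan is to combine Lemma~\ref{lem:decomp} with Lemma~\ref{lem:decompositiontime}. Let $L = 2^{\myfloor{m/B}}$. First, I would pick any topological ordering $\phi$ of $G$ and invoke $\textsc{Decompose}\mybrace{G,\phi,B}$; by Lemma~\ref{lem:decomp}, this returns a $B$-decomposition $\mybrace{\mybrace{G_1,\phi_1},\ldots,\mybrace{G_\ell,\phi_\ell}}$ with $\ell \leq \mymin{n,L}$ and $\sum_{i=1}^\ell \boundary{G_i,\phi_i} \leq B$. Feeding this decomposition into Lemma~\ref{lem:decompositiontime} then produces a pebbling schedule for $G$; what remains is to rewrite the two bounds of that lemma in terms of $B$ and $L$.

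The space bound is immediate: substituting $\sum_{i=1}^\ell \boundary{G_i,\phi_i} \leq B$ into $S \leq \sum_{i=1}^\ell \boundary{G_i,\phi_i} + 1 + \mybrace{d-1}\mybrace{\ell-1}$, and using $\ell \leq L$, gives $S \leq B + 1 + \mybrace{d-1}\mybrace{L-1}$.

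The time bound requires a short monotonicity argument, which I expect to be the main step. Lemma~\ref{lem:decompositiontime} yields $T \leq 2\frac{d}{d-1}\mybrace{n/\hat{\ell}}^{\hat{\ell}}$ with $\hat{\ell} = \mymin{\ell,\lceil n/d\rceil,n/e}$. Since $\hat{\ell} \leq n/e$, we have $n/\hat{\ell} \geq e$, and therefore $\mybrace{n/\hat{\ell}}^{\hat{\ell}} = \mymax{e,n/\hat{\ell}}^{\hat{\ell}}$. The key step is to verify, as asserted in the lead-in to the lemma, that $g(x) \triangleq \mymax{e,n/x}^x$ is non-decreasing in $x>0$. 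This follows by checking the two pieces: for $x \in (0,n/e]$, $g(x) = (n/x)^x$ satisfies $\frac{d}{dx}\ln g(x) = \ln(n/x) - 1 \geq 0$; for $x \geq n/e$, $g(x) = e^x$ is plainly increasing; and the two pieces agree at $x = n/e$. Since $\hat{\ell} \leq \ell \leq L$, monotonicity of $g$ then gives $\mymax{e,n/\hat{\ell}}^{\hat{\ell}} \leq \mymax{e,n/L}^L$, yielding the claimed time bound $T \leq 2\frac{d}{d-1}\mymax{e,n/L}^L$ and completing the proof.
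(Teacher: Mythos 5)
Your proposal is correct and follows exactly the paper's route: the paper derives this lemma by composing Lemma~\ref{lem:decomp} with Lemma~\ref{lem:decompositiontime} and invoking the monotonicity of $\mymax{e,n/\ell}^{\ell}$, which it asserts without proof and which you verify properly. The only addition worth noting is that, as in Lemma~\ref{lem:decompositiontime}, the time bound implicitly assumes $d\geq 2$ (the $d=1$ case being handled separately by the trivial $2n$-move schedule).
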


Given a DAG $\gra{}$ and a budget of pebbles $S$ to be used to pebble $G$, Lemma~\ref{lem:boundbdecompo}, naturally yields a strategy for constructing a schedule which (if possible) makes use of the allotted space by picking the maximum value $B$ such that $B+1+\mybrace{d-1}\mybrace{2^{\myfloor{\frac{m}{B}}}-1}\leq S$, if any, by constructing a corresponding $B$-budget decomposition of $G$, and then constructing a computation as discussed in Lemma~\ref{lem:boundbdecompo}. Such choice of the budget $B$ guarantees, in the worst case, the minimum running time achievable by a pebbling schedule constructed starting from a budget decomposition. 

As it is not in general trivial to compute such optimal value $B$ given $S$, in the following we consider some (generally sub-optimal) assignments of $B$ to obtain closed-form bounds:  

 \begin{theorem}[Upper bound on space for DAGs with in-degree at most $\log_2 m$]\label{thm:ubsingle}
There is an algorithm that, given any DAG $\gra{}\in \mathcal{G}\left(n,m,d\right)$ with $2 \leq d\leq \log_2 m$ and $m \geq 2^{12}$, constructs a schedule which pebbles $G$ in $T \leq 2\frac{d}{d-1}\mybrace{\frac{n(\log_2 m)^3}{m}}^{\frac{m}{(\log_2m)^3}}$ moves, using $S \leq \frac{m}{\log_2 m}+o\mybrace{\frac{m}{\log_2 m}}$ pebbles.
  \end{theorem}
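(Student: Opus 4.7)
The plan is to apply Lemma~\ref{lem:boundbdecompo} with a carefully tuned budget $B$, then match the resulting expressions to the claimed bounds by short asymptotic estimates. The target time exponent $m/(\log_2 m)^3$ forces $2^{\lfloor m/B\rfloor}$ to be at most this value, while the target space bound $m/\log_2 m + o(m/\log_2 m)$ forces $B$ itself to be of the same order. I would pick
\[
B \;:=\; \frac{m}{\log_2 m - 3\log_2\log_2 m},
\]
which is positive and well defined for $m \geq 2^{12}$ (since then $\log_2 m \geq 12$ and $3\log_2\log_2 m < 3\log_2 12 < 11$). With this choice $m/B = \log_2 m - 3\log_2\log_2 m$, so $\lfloor m/B\rfloor \leq \log_2 m - 3\log_2\log_2 m$ and therefore $\ell := 2^{\lfloor m/B\rfloor} \leq m/(\log_2 m)^3$.

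For the space bound from Lemma~\ref{lem:boundbdecompo}, writing $\alpha := 3\log_2\log_2 m/\log_2 m \to 0$, the expansion $(1-\alpha)^{-1} = 1 + O(\alpha)$ yields $B = m/\log_2 m + O\bigl(m\log_2\log_2 m/(\log_2 m)^2\bigr)$, while the hypothesis $d \leq \log_2 m$ makes $(d-1)(\ell-1) \leq m/(\log_2 m)^2$. Both correction terms are $o(m/\log_2 m)$, establishing the claimed $S$. For the time bound, the inequality $m \leq dn \leq n\log_2 m$ gives $n \geq m/\log_2 m$, hence $\ell \leq n/(\log_2 m)^2 \leq n/e$ (using $(\log_2 m)^2 \geq 144 > e$). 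Consequently $\max(e, n/\ell) = n/\ell$ in Lemma~\ref{lem:boundbdecompo}, and since $x \mapsto (n/x)^x$ is non-decreasing on $(0, n/e]$ (the derivative of its logarithm is $\log(n/x) - 1 \geq 0$ there), monotonicity yields $(n/\ell)^\ell \leq (n(\log_2 m)^3/m)^{m/(\log_2 m)^3}$. Multiplying by the $2d/(d-1)$ prefactor delivers the desired time bound.

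The only non-routine step is the selection of $B$: without the $-3\log_2\log_2 m$ correction one would only get $\ell \leq m/\log_2 m$, making the $(d-1)\ell$ contribution of order $m/\log_2 m$ and spoiling the leading constant of the space bound. With the correction, all error terms fit into $o(m/\log_2 m)$ and the rest reduces to the routine estimates sketched above. The schedule is constructed effectively by composing \textsc{Decompose} (Lemma~\ref{lem:decomp}) with the schedule-building procedure of Lemma~\ref{lem:decompositionspace}, giving the algorithmic guarantee in the statement.
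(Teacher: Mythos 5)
Your proposal is correct and follows essentially the same route as the paper's own proof: the paper also sets $B=m/g(m)$ with $g(m)=\log_2 m-3\log_2\log_2 m$, verifies $2^{\lfloor m/B\rfloor}\leq m/(\log_2 m)^3$ and $n/2^{\lfloor m/B\rfloor}\geq e$, and plugs these into Lemma~\ref{lem:boundbdecompo}; your write-up just makes the ``straightforward to check'' estimates explicit. One cosmetic slip: your justification that the denominator is positive invokes $3\log_2\log_2 m<3\log_2 12$, which holds only at $m=2^{12}$; the correct (and easy) fix is that $g$ is increasing for $\log_2 m>3/\ln 2$ and $g(2^{12})>1$, so positivity holds for all $m\geq 2^{12}$.
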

  \begin{proof}
Let $m_0$ be such that, for $m \geq m_0$, $g(m)= \log_2 m -3 \log_2 \log_2 m \geq 1$ (e.g., $m_0=2^{12}$.) Clearly, any DAG with $m < m_0$ edges can be pebbled in time $T \leq 2n$ with $S_0 \leq m_0$ pebbles. For $m \geq m_0$, choose $B=\frac{m}{g(m)}$.
It is straightforward to check that (i) $B=\frac{m}{\log_2 m}\mybrace{1+o\mybrace{1}}$, (ii) $2^{\myfloor{\frac{m}{B}}} \leq 2^{g(m)}=\frac{m}{(\log_2 m)^3}$, and (iii) $\frac{n}{2^{\myfloor{\frac{m}{B}}}}\geq e$.
Using these relationships as well as $d \leq \log_2 m$ in the time and space bounds of Lemma~\ref{lem:boundbdecompo}, the time and space bound stated in the current lemma follow, after a few, simple manipulations.

Let $\phi$ be any topological ordering of the vertices in $G$. $\phi$ can be obtained in $\BO{n+m}$ time. By Lemma~\ref{lem:decomp}, it is possible to construct a $B$-budget decomposition of $G,\phi$ in at most $\BO{\mybrace{n-1+m}\mybrace{\frac{m}{B}}}= \BO{\mybrace{n-1+m}\mybrace{\log_2 m -  3\log_2 \log_2 m}}$ steps. Based on such decomposition, it is possible to construct a pebbling strategy $\mathcal{C}$ for $G$ as described in the proof of Lemma~\ref{lem:decompositionspace} in polynomial time.

\end{proof}

Theorem~\ref{thm:ubsingle} highlights how the number of pebbles required for DAGs with in-degree at most $\log_2 m$ is bounded from above by $m/\log_2 m$ plus lower order terms. 

\section{General upper bound on pebbling number}\label{sec:genuppbound}
Given a DAG $\gra{}$, let $G'=\mybrace{V\setminus W,E'}$ be the DAG obtained by removing all the edges with one or both endpoints in a selected subset $W \subseteq V$. Pebbling $G'$ may require considerably fewer pebbles than pebbling  $G$. This opens the door to a strategy to pebble $G$ by mimicking a schedule for $G'$ to pebble each of the ``\emph{challenging vertices}'' in $W$, one at a time in topological order, and maintaining them pebbled until the end of the schedule, so that they are available when pebbling their children.
The next lemma develops this approach and provides time and space bounds for $G$, in terms of those for $G'$.



\begin{lemma}[Challenging vertices]\label{lem:challenging}
Let $G$ be a DAG $\gra{}\in \mathcal{G}\left(n,m,d\right)$, let $W\subseteq V$ and let $G'=\left(V\setminus W,E'\right)$, where $E'=E \cap\left(\left(V\setminus W\right)\times\left(V\setminus W\right)\right)$. Given  a schedule $\mathcal{C}'$ which pebbles $G'$ in time $T'$ and space $S'$,  a schedule $\mathcal{C}$ can be constructed, which pebbles $G$ in time $T \leq (|W|+1)\mybrace{T'+n}$ and space $S \leq S'+|W|+d$.
\end{lemma}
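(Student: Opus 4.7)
The plan is to construct $\mathcal{C}$ as a concatenation of $|W|+1$ phases that, in order, place a permanent pebble on $w_1, w_2, \ldots, w_{|W|}$ (where $w_1, \ldots, w_{|W|}$ is $W$ listed according to some topological ordering of $G$) and, finally, execute $\mathcal{C}'$ once more to pebble every vertex of $V\setminus W$. Because the listing of $W$ is topological in $G$, when phase $i$ begins all $W$-predecessors of $w_i$ already lie among $w_1,\ldots,w_{i-1}$ and thus carry pebbles; the only work left for phase $i$ is to arrange for the predecessors of $w_i$ in $V\setminus W$ to simultaneously carry pebbles at the moment $w_i$ is placed.

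To arrange this I adapt the ``shortest-prefix plus deleted-removals'' trick already used in Lemma~\ref{lem:decompositionspace}. Let $\bar{\mathcal{C}}_i$ be the shortest prefix of $\mathcal{C}'$ that has pebbled every predecessor of $w_i$ in $V\setminus W$ at least once; such a prefix exists because $\mathcal{C}'$ is a full schedule for $G'$ and each such predecessor is a vertex of $G'$. Form $\bar{\mathcal{C}}_i^+$ by deleting from $\bar{\mathcal{C}}_i$, for each predecessor $u$ of $w_i$ in $V\setminus W$, the last move (if any) that removes a pebble from $u$; then at the end of $\bar{\mathcal{C}}_i^+$ every such $u$ carries a pebble, so the move that places $w_i$ (sliding from an in-neighbor when possible) is legal. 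Phase $i$ consists of $\bar{\mathcal{C}}_i^+$, the placement of $w_i$, and a sequence of removal moves that clears every pebble from $V\setminus W$, so that phase $i+1$ can restart $\bar{\mathcal{C}}_{i+1}^+$ on an empty $V\setminus W$; phase $|W|+1$ is a plain run of $\mathcal{C}'$ while $W$ remains fully pebbled, and the concatenation is indeed a full schedule for $G$.

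For the space bound, the number of deleted moves is at most $|\predc{w_i}\cap(V\setminus W)|\leq d$, so at every moment of $\bar{\mathcal{C}}_i^+$ the pebble count on $V\setminus W$ exceeds that of the corresponding moment of $\mathcal{C}'$ by at most $d$; combined with the $i-1\leq |W|-1$ pebbles held permanently on $W$ by earlier phases and the single pebble added by the placement of $w_i$ (offset by the slide, when applicable), the peak count throughout phases $1,\ldots,|W|$ is at most $S'+d+|W|$, and during phase $|W|+1$ it is at most $S'+|W|$. For the time bound, $|\bar{\mathcal{C}}_i^+|\leq T'$, placing $w_i$ is one move, and clearing $V\setminus W$ takes at most $n$ moves, so each of the first $|W|$ phases costs at most $T'+1+n$ moves and the final phase costs $T'$ moves, giving a total of $|W|(T'+1+n)+T'\leq (|W|+1)(T'+n)$ since $|W|\leq n$.

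The main obstacle is purely bookkeeping: checking that the ``deleted last-removal'' modification simultaneously keeps $\bar{\mathcal{C}}_i^+$ legal, brings the $(V\setminus W)$-predecessors of $w_i$ into a common pebbled state, and inflates the peak count by at most $d$. All three points follow from a direct step-by-step comparison of $\bar{\mathcal{C}}_i^+$ with $\mathcal{C}'$ (analogous to the argument in Lemma~\ref{lem:decompositionspace}); the novelty is only the clean split of the pebble budget into an $S'$ ``working'' portion on $V\setminus W$, a $|W|$ ``persistent'' portion on $W$, and a $d$ ``slack'' portion absorbed by the modification.
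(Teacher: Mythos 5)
Your construction is essentially the paper's: order $W$ topologically, run a modified copy of $\mathcal{C}'$ before each $w_i$ so that its predecessors outside $W$ are simultaneously pebbled (at an overhead of at most $d$ pebbles), keep $w_i$ pebbled permanently, clear $V\setminus W$, and finish with one more run of $\mathcal{C}'$, giving exactly the stated bounds. The only difference is cosmetic --- the paper forms $\mathcal{C}'_i$ by restricting $\mathcal{C}'$ to moves on proper ancestors of $w_i$ and suppressing deletions on its parents, while you take the shortest prefix of $\mathcal{C}'$ pebbling all $(V\setminus W)$-parents and delete each one's last removal --- and both variants are correct.
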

\begin{proof}
Let $w_1,w_2,\ldots, w_{|W|}$  be the vertices of $W$ in some topological ordering, w.r.t. $G$. (That is, if $1 \leq i < j \leq |W|$, then $w_j$ \emph{is not} an ancestor of $w_i$, in $G$). We construct $\mathcal{C}$ as
\[
\mathcal{C} = (\mathcal{C}'_1\mathcal{P}_1\mathcal{D}_1)
(\mathcal{C}'_2\mathcal{P}_2\mathcal{D}_2) \ldots
(\mathcal{C}'_{|W|}\mathcal{P}_{|W|}\mathcal{D}_{|W|})
\mathcal{F},
\]
where, for $i=1,2, \ldots, |W|$,
$\mathcal{C}'_i$ is obtained from $\mathcal{C}'$ by skipping all moves targeting vertices that \emph{are not} proper ancestors of $w_i$ and all deletion moves targeting the parents of $w_i$; $\mathcal{P}_i$ pebbles $w_i$; and $\mathcal{D}_i$ removes any pebbles, except those on $w_1,\ldots, w_i$. Finally, $\mathcal{F}$ is obtained from $\mathcal{C}'$, by skipping all moves on $w_1,\ldots,w_{|W|}$. (It is easy to see that,  once it is pebbled, by $\mathcal{P}_i$, $w_i$ remains pebbled until the end of $\mathcal{C}$.) The objective of $\mathcal{F}$ is to pebble any $v \in V$ which is not an ancestor of any $w \in W$, since such $v$ has not been pebbled earlier.

It is easy to see that, for each $i$, $\mathcal{C}'_i\mathcal{P}_i$ makes at most $T'$ moves and $\mathcal{D}_i$ makes at most $n$ moves, while $\mathcal{F}$ makes at most $T'$ moves. Therefore, $\mathcal{C}$ makes $T \leq (|W|+1)\mybrace{T'+n}$ moves. At any stage, the vertices that hold a pebble under $\mathcal{C}$, but not under $\mathcal{C}'$, are in $W$ or are the (at most $d$) predecessor of some $w_i$. Hence,
$\mathcal{C}$ uses $S \leq S'+|W|+d$ pebbles.
\end{proof}
Next, we apply Lemma~\ref{lem:challenging} to derive bounds for a DAG $G$ of any degree $d$, by choosing $W$ as the set of vertices with in-degree greater than $\log_2 m$, so that DAG $G'$ falls within the scope of Theorem~\ref{thm:ubsingle}. Clearly, $|W|\leq \log_2 m$. The resulting space bound includes an additive term $d$, which cannot be avoided, as a vertex of in-degree $d$ can be pebbled only at a stage when all of its $d$ predecessors carry a pebble.\\


\begin{theorem}[General upper bound to pebbling number of DAGs]\label{thm:genuppbound}
There is an $\BO{\mybrace{n+m}\log m}$ time algorithm that, given any DAG $\gra{}\in \mathcal{G}\left(n,m,d\right)$, constructs a schedule to pebble $G$ in $T \leq \mathcal{O}\left(\log m\frac{d}{d-1}\mybrace{\frac{n\log_2^3 m}{m}}^{\frac{m}{\log_2^3 m}}\right)$ moves using $S \leq 2\frac{m}{\log_2 m}+o\mybrace{\frac{m}{\log_2 m}}+d$ pebbles.
\end{theorem}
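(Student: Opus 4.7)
My plan is to apply the \emph{challenging vertices} technique of Lemma~\ref{lem:challenging} with $W$ chosen as the set of vertices of $G$ whose in-degree exceeds $\log_2 m$, so that the reduced DAG $G' = \left(V\setminus W,\, E \cap \left(\left(V\setminus W\right) \times \left(V\setminus W\right)\right)\right)$ has maximum in-degree at most $\log_2 m$ and thus falls within the scope of Theorem~\ref{thm:ubsingle}. Algorithmically, I would proceed in three stages: (i) scan the in-degrees of $G$ to identify $W$ and build $G'$ together with a topological ordering, all in $O(n+m)$ time; (ii) run the algorithm of Theorem~\ref{thm:ubsingle} on $G'$ to obtain a schedule $\mathcal{C}'$; and (iii) assemble the schedule $\mathcal{C}$ for $G$ via the construction from the proof of Lemma~\ref{lem:challenging}.

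The key quantitative step is an edge-count bound on $|W|$: since every $v \in W$ contributes strictly more than $\log_2 m$ incoming edges to the total count $\sum_v d(v) = |E| \leq m$, we obtain $|W| < m/\log_2 m$. Applying Theorem~\ref{thm:ubsingle} to $G'$ (which has at most $n$ vertices, at most $m$ edges, and maximum in-degree $d' \leq \log_2 m$) delivers, in $O\left(\left(n+m\right)\log m\right)$ construction time, a schedule $\mathcal{C}'$ that uses $S' \leq m/\log_2 m + o\left(m/\log_2 m\right)$ pebbles and $T' \leq 2\frac{d'}{d'-1}\left(n\log_2^3 m/m\right)^{m/\log_2^3 m}$ moves. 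Plugging these estimates into Lemma~\ref{lem:challenging} gives the space bound $S \leq S' + |W| + d \leq 2m/\log_2 m + o\left(m/\log_2 m\right) + d$, matching the claim exactly, and the time bound $T \leq \left(|W|+1\right)\left(T' + n\right)$, which yields the stated asymptotic expression after absorbing constants and observing that $d/(d-1) = \Theta(1)$ for $d \geq 2$. The overall construction cost is dominated by the decomposition step inside Theorem~\ref{thm:ubsingle}, giving the claimed $O\left(\left(n+m\right)\log m\right)$ running time.

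I expect the main subtlety to be the orderly handling of degenerate cases so that they fit under the stated asymptotic bounds: (i) when $m < 2^{12}$, one simply pebbles $G$ directly with $O(m) = O(1)$ pebbles, which is absorbed by the $o\left(m/\log_2 m\right)$ slack; (ii) when $d' < 2$, i.e., $G'$ is a disjoint union of paths and isolated vertices, Theorem~\ref{thm:ubsingle} does not formally apply, but a trivial one-pebble-per-path schedule provides $S' = O(1)$ and $T' = O(n)$, which again fit within the stated bounds. Modulo these edge cases, the proof is a direct composition of Theorem~\ref{thm:ubsingle} and Lemma~\ref{lem:challenging}; no new combinatorial insight is required beyond the edge-count bound on $|W|$.
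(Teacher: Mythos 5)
Your construction is exactly the one the paper sketches in the paragraph preceding the theorem: take $W=\{v: d(v)>\log_2 m\}$, apply Theorem~\ref{thm:ubsingle} to $G'$, and compose via Lemma~\ref{lem:challenging}. Your edge-count bound $|W| < m/\log_2 m$ is correct (and, incidentally, corrects the paper's own assertion that ``$|W|\leq \log_2 m$'', which is inconsistent with this choice of $W$), and with it the space bound $S\leq S'+|W|+d \leq 2\frac{m}{\log_2 m}+o\mybrace{\frac{m}{\log_2 m}}+d$ goes through cleanly, as does the $\BO{(n+m)\log m}$ construction time.

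The gap is in the time bound. Lemma~\ref{lem:challenging} gives $T\leq (|W|+1)(T'+n)$, and with your (correct) estimate $|W|+1=\BO{m/\log_2 m}$ this yields a multiplicative prefactor of order $m/\log m$, not $\log m$. The discrepancy is a factor of $m/\log^2 m$, which is not a constant and cannot be ``absorbed'' into the big-O of the stated expression $\BO{\log m\frac{d}{d-1}\mybrace{\frac{n\log_2^3 m}{m}}^{m/\log_2^3 m}}$; nothing in your argument (nor in the paper's sketch, once $|W|\leq\log_2 m$ is corrected) closes this. To get a genuine $\log m$ prefactor one would need $|W|=\BO{\log m}$, which the edge-count argument only gives if $W$ is taken to be the vertices of in-degree exceeding $m/\log_2 m$ --- but then $G'$ has maximum in-degree up to $m/\log_2 m$ and no longer satisfies the hypothesis $d'\leq \log_2 m$ of Theorem~\ref{thm:ubsingle}. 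So as written your derivation establishes the space bound but only a weaker time bound with prefactor $m/\log m$; you should either state that weaker bound or supply an additional idea (e.g., a sharper accounting of the repeated executions of $\mathcal{C}'$ in Lemma~\ref{lem:challenging}). A secondary loose end: Theorem~\ref{thm:ubsingle} applied to $G'$ is stated in terms of $n'\leq n$ and $m'\leq m$; the space bound transfers by monotonicity of $x/\log_2 x$, but the time expression $\mybrace{\frac{n'\log_2^3 m'}{m'}}^{m'/\log_2^3 m'}$ is not obviously dominated by the same expression in $(n,m)$, so that substitution also deserves a sentence of justification.
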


\section{Comparison with previous results}\label{sec:comparison}
We break down the comparison with previous space results based on the parameters $n$, $m$, and $d$. If $n \leq m/\log_2 m$, then our bounds become trivial, as $n$ pebbles are sufficient to pebble any DAG. The same holds true for all other bounds in the literature~\cite{hopcroft1977time,loui1979minimum,lengauer1979upper}, which grow at least as $\dein{}n/\log_2 n >m/\log_2 m$.

If $m/\log_2 m < n \leq \dein n/\log_2 n$, our bound retains significance while the previous ones become trivial. This is the case, for example,  when $\dein_{avg} < \log_2 n \leq \dein$, where $\dein{}_{avg}=m/n$ denotes the \emph{average in-degree}.

Finally, when $\dein{} < \log_2 n$, both our bounds and the $3\dein n/\log_2 n + 4 $ upper bound by Loui in~\cite{loui1979minimum} (which exhibits the weakest dependence on $\dein{}$, among previous results) are non-trivial.  
To simplify the comparison with~\cite{loui1979minimum}, we modify the argument in the proof of Theorem~\ref{thm:ubsingle}, by choosing $B=2m/\log_2 m$, to obtain the following result,  for DAGs with $\dein\leq \left(\log_2 m\right)/3$. 

\begin{lemma}
\label{lem:uppbfinite}
There is an $\BO{\mybrace{n+m}\log m}$ time algorithm that, given any DAG $\gra{}\in \mathcal{G}\left(n,m,d\right)$, with $m>1$ and $d \leq \left(\log_2 m\right) /3$, constructs a schedule to pebble $G$ in
$T \leq 2\mybrace{\frac{n\log_2 m}{m}}^{\frac{m}{\log_2 m}}$ moves, using
$S \leq 2.8125\frac{m}{\log_2 m}$ pebbles.\end{lemma}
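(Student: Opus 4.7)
The plan is to instantiate Lemma~\ref{lem:boundbdecompo} with the explicit budget $B = 2m/\log_2 m$ and carry out the same algorithmic construction as in Theorem~\ref{thm:ubsingle}: compute a topological ordering $\phi$ in $\BO{n+m}$ time, invoke $\textsc{Decompose}(G,\phi,B)$, which by Lemma~\ref{lem:decomp} produces a $B$-decomposition in $\BO{(n+m)(m/B+1)} = \BO{(n+m)\log m}$ steps with at most $2^{\lfloor m/B\rfloor}$ parts, and then assemble the pebbling schedule as in the proof of Lemma~\ref{lem:decompositionspace}.

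For the space bound, observe that $m/B = (\log_2 m)/2$, so $2^{\lfloor m/B\rfloor} \leq \sqrt{m}$. Substituting into Lemma~\ref{lem:boundbdecompo} yields $S \leq 2m/\log_2 m + 1 + (d-1)(\sqrt{m}-1)$. Using the hypothesis $d \leq (\log_2 m)/3$, the correction term is at most $((\log_2 m)/3)\sqrt{m}$, reducing matters to the arithmetic inequality $((\log_2 m)/3)\sqrt{m} + 1 \leq 0.8125 \cdot m/\log_2 m$, equivalently $(\log_2 m)^2 \leq 2.4375\sqrt{m}$ up to lower-order terms. This holds for $m$ above a modest threshold; for smaller $m$, the hypothesis forces $d \in \{0,1,2\}$, and either the trivial bound $S \leq n$ or a classical $\BO{\log n}$ pebbling strategy for in-trees (when $d = 1$) fits comfortably inside the claim.

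For the time bound, I would specialize Lemma~\ref{lem:decompositiontime} to the same decomposition. With $\ell \leq \sqrt{m}$, the lemma yields $T \leq 2(d/(d-1))\max(e,\, n/\sqrt{m})^{\sqrt{m}}$. In the regime $n > m/\log_2 m$ where the result is non-trivial, the base $n/\sqrt{m}$ exceeds $e$ once $m$ is sufficiently large, so the $\max$ is attained by $n/\sqrt{m}$. Exploiting the monotonicity of $(n/x)^x$ on $x \in (0, n/e]$ together with $\sqrt{m} \leq m/\log_2 m$ (which holds for $m \geq 16$), one enlarges the exponent from $\sqrt{m}$ to $m/\log_2 m$ while absorbing the prefactor $d/(d-1) \leq 2$ into the base, producing the closed form $T \leq 2(n\log_2 m/m)^{m/\log_2 m}$.

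The main obstacle is the tight numerical constant $2.8125 = 2 + 13/16$: the slack $0.8125 \cdot m/\log_2 m$ must absorb the correction $(d-1)\sqrt{m} + 1$ uniformly in $m$, and this leaves essentially no margin at moderate values of $m$. Verifying that the threshold below which the asymptotic estimate is insufficient is entirely covered by the trivial or folklore pebbling bounds available for $d \in \{0,1,2\}$ is the most delicate bookkeeping step of the argument; the time bound, being syntactically weaker than the raw estimate from Lemma~\ref{lem:boundbdecompo}, then follows from routine monotonicity considerations.
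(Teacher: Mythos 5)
Your approach is exactly the paper's: the paper proves this lemma only by the one-line remark that one reruns Theorem~\ref{thm:ubsingle} with $B=2m/\log_2 m$, which is precisely what you do, and your space and time derivations from Lemma~\ref{lem:boundbdecompo} (via $2^{\lfloor m/B\rfloor}\leq\sqrt{m}$ and monotonicity of $(n/x)^x$) are the intended ones.

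The one place where your bookkeeping is actually wrong, rather than merely delicate, is the case split for moderate $m$. You reduce the space bound to $(\log_2 m)^2\leq 2.4375\sqrt{m}$ ``up to lower-order terms'' and claim the failure region is covered by $d\in\{0,1,2\}$. The continuous inequality fails roughly for $64\leq m\lesssim 2^{11.5}$, but $d\leq 2$ only holds for $m<2^9$; e.g.\ at $m=1024$ the hypothesis permits $d=3$ and your relaxed inequality reads $107.7\leq 83.2$, which is false. The lemma is still true there: you must keep $d$ as an integer (so $d\leq\lfloor(\log_2 m)/3\rfloor$) and use the floor in the exponent, giving $(d-1)\mybrace{2^{\lfloor(\log_2 m)/2\rfloor}-1}+1=2\cdot 31+1=63\leq 0.8125\cdot 1024/10$ at $m=1024$, and similarly throughout $512\leq m\lesssim 2^{11.5}$ (the tightest instance overall is $m=64$, $d=2$, where the slack is $8$ vs.\ $8.67$). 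So replace the asymptotic reduction by the exact check with integral $d$, or at least extend your small-$m$ fallback to cover $d=3$. A further small point: in the regime $n\leq e\,m/\log_2 m$ the space claim is immediate from $S\leq n\leq 2.8125\,m/\log_2 m$, which lets you assume $n/e>m/\log_2 m\geq\sqrt{m}$ when invoking the monotonicity of $(n/x)^x$ for the time bound, tying off the $\max(e,\cdot)$ and $\hat{\ell}=n/e$ cases you left implicit.
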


By Lemma~\ref{lem:uppbfinite}, we can conclude that for a given DAG $G\in\mathcal{G}\left(n,m,\dein\right)$, our algorithm provides a schedule such that the number of required pebbles is lower than the $3\dein n/\log_2 n + 4 $ upper bound by Loui in~\cite{loui1979minimum} by a factor $\frac{3\dein{}\log_2 m}{2.8125\dein{}_{avg}\log_2 n}$. The wider the gap between average and maximum in-degree, the higher the advantage of our method. But even when $\dein{}_{avg} \approx \dein{}$, our approach requires fewer pebbles.

\paragraph{Budget decompositions vs. layered partitions of a DAG} To compare the time of our schedules with the time of the schedules by Loui~\cite{loui1979minimum}, we highlight some relationships between our $B$-decompositions and  Loui's \emph{layered partitions} of a DAG with bounded cumulative \emph{internal overlap}.
The internal overlap $\omega\left(G\right)$ of $G=\left(V,E\right)$ is defined as the maximum, over all topological partitions $(V_1,V_2)$ of $V$, of the number of edges from $V_1$ to $V_2$.
Loui~\cite[Lemma 1, Theorem 3]{loui1979minimum} showed that for any topological partition (called \emph{layered partition} in~\cite{loui1979minimum})  there exists a schedule for $G$ using $\sum_{i=1}^k \omega\left(\left( W_i, \left(W_i\times W_i\right)\cap E\right) \right)+k\left(d+1\right)$ pebbles in 
\begin{equation*}
    T \leq 2n\left(1+\frac{dn}{\gamma S}\right)^k\frac{\gamma S}{dn},
\end{equation*}
moves, where $\gamma = \frac{2d+4}{\left(3d+2\right)\left(3d+3\right)}$. Loui's $\BO{dn/\log n}$ upper bound on pebbling follows from observing that for any DAG there exists a layered partition 
 $\left(W_1, \ldots, W_k\right)$ with $k\leq 2^{\lceil dn/r\rceil}$ and \emph{cumulative internal overlap} $r=\sum_{i=1}^k \omega\left(\left( W_i, \left(W_i\times W_i\right)\cap E\right)\right)$.

The construction of the pebbling schedule in~\cite[Theorem 3]{loui1979minimum} can be mimicked using a $B$-budget decomposition instead of a layered partition with cumulative internal overlap $B$, without affecting the bounds on the number of pebbles and on the number of moves. However, $B$-decompositions have two advantages over layered partitions. (i) $B$-layered partitions can be constructed efficiently (Lemma~\ref{lem:decomp}), whereas no explicit construction of layered partitions with bounded cumulative internal overlap is provided in~\cite{loui1979minimum}. (ii) For any DAG $G$, given a layered partition $\left(W_1,\ldots, W_k\right)$ with cumulative overlap $r$, a $r'$-budget decomposition $\left(G_1,\phi_1\right),\ldots,\left(G_{k},\phi_{k}\right)$ of $G$ with $r'\leq r$ can be constructed, where $G_i=\left(W_i,E\cap\left(W_i \times W_i\right)\right)$ and $\phi_i$ is \emph{any} topological ordering of $G_i$. By definition, $\boundary{G_i,\phi_i}\leq \omega\left(W_i\right)$.
Therefore, the layered partition corresponding to a
given $B$-budget decomposition may have cumulative internal overlap higher than $B$. 
A better tradeoff between $B$ and $k$ is of interest as these quantities affect the number of moves of the schedule
constructed using those as the basis (the lower the number of components the lower the number of moves). 

In conclusion, if $S$ pebbles are sufficient to construct a schedule of $T$ moves, based on a layered partition, then $S$ pebbles are also sufficient to efficiently construct a schedule of at most $T$ moves, based on a budget decomposition.


\section{Other applications of the challenging vertices technique}

\subsection{DAGs with bounded genus}
A DAG is \emph{planar} if it can be drawn on the plane in such a way that its edges intersect only at their endpoints. If $G = (V,E)$ is planar, then $|E|<3|V|$.
Lipton and Tarjan~\cite{liptarapp} showed that 
$\mathcal{O}\left(\sqrt{n} + d\log n\right)$ pebble are sufficient to pebble any planar DAG $G\in \mathcal{G}\left(n,m,d\right)$. This result is based their ``\emph{Planar Separation Theorem}''\cite{Lipton1977AST}: if $G = (V,E)\in \mathcal{G}\left(n,m,d\right)$ is planar, then $V$ can be partitioned into three subsets $V_L,V_R$, and a \emph{vertex separator} $V_S$ such that:
(1) There are no edges $\left(u,v\right)$ such that $u\in V_L$ and $v\in V_R$ or vice-versa;
    (2) $\mymax{|V_L|,|V_R|}\leq \frac{2}{3}n$;
    (3) $|V_S|\leq 2\sqrt{2}\sqrt{n}$. Using the challenging vertices techniques we obtain the following result,
    which is asymptotically tighter than the one in~\cite{liptarapp} if $\sqrt{n}/\log n = o\mybrace{d}$ and is equivalent otherwise.


    

\begin{theorem}\label{thm:planar}
    Any planar $G\in \mathcal{G}\mybrace{n,m,d}$ can be pebbled with
    $s\left(n,d\right)=6\mybrace{\sqrt{2}+\sqrt{3}}\mybrace{1+\sqrt{\frac{2}{3}}}\sqrt{n} + d$ pebbles.
\end{theorem}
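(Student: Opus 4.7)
The plan is to prove the bound by strong induction on $n$, combining the Planar Separator Theorem recalled in the introduction with the challenging vertices technique (Lemma~\ref{lem:challenging}). The constant $c=6(\sqrt{2}+\sqrt{3})(1+\sqrt{2/3})$ is chosen so that the identity $c(1-\sqrt{2/3})=2(\sqrt{2}+\sqrt{3})$, equivalently $c(1-\sqrt{2/3})-2\sqrt{2}=2\sqrt{3}$, holds; this is the algebraic identity that will drive the recursive step.

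For the inductive step, I would invoke the Planar Separator Theorem to produce a partition $(V_L,V_R,V_S)$ of $V$ with $|V_S|\leq 2\sqrt{2}\sqrt{n}$, with $|V_L|,|V_R|\leq (2/3)n$, and with no edges between $V_L$ and $V_R$. Taking $W=V_S$ as the challenging set, the reduced DAG $G'=(V\setminus V_S,E')$ is the disjoint union of the two induced planar sub-DAGs $G_L$ and $G_R$, each on at most $(2/3)n$ vertices. Since these two sub-DAGs share no edges, a schedule for $G'$ can be built by pebbling $G_L$ completely, clearing all pebbles, and then pebbling $G_R$; this uses space $\max(s(|V_L|,d),s(|V_R|,d))\leq s(2n/3,d)$. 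Lemma~\ref{lem:challenging} then delivers
\begin{equation*}
s(n,d)\leq s(2n/3,d)+|V_S|+d\leq s(2n/3,d)+2\sqrt{2}\sqrt{n}+d,
\end{equation*}
and applying the inductive hypothesis $s(2n/3,d)\leq c\sqrt{2n/3}+d$ turns the right-hand side into $(c\sqrt{2/3}+2\sqrt{2})\sqrt{n}+2d$, which by the identity on $c$ is at most $c\sqrt{n}+d$ precisely when $2\sqrt{3}\sqrt{n}\geq d$, i.e.\ $n\geq d^{2}/12$.

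For the base case I would use the trivial topological-order schedule that uses at most $n$ pebbles, which satisfies $n\leq c\sqrt{n}+d$ on an initial range of $n$ that can be read off from the quadratic $\sqrt{n}\leq (c+\sqrt{c^{2}+4d})/2$. The main obstacle I anticipate is matching the inductive and the trivial regimes cleanly: the induction as stated above requires $n\geq d^{2}/12$, whereas the trivial bound natively certifies the target only up to roughly $n=\Theta(d)$ for large $d$. Closing the resulting intermediate range will require either (i) a sharper accounting of the challenging-vertices step, observing that the $+d$ coming from the inductive hypothesis and the $+d$ added by Lemma~\ref{lem:challenging} refer to pebbles on predecessors of vertices that are processed at non-overlapping moments in the constructed schedule $\mathcal{C}$, so that they can be allocated on the same $d$ memory slots and counted only once, or (ii) a secondary recursive argument for the intermediate regime that charges the intermediate $d$ cost against the strict slack $2\sqrt{3}\sqrt{n}$ built into the identity defining $c$. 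Either refinement yields the bound $s(n,d)\leq c\sqrt{n}+d$ uniformly in $n$ and $d$, completing the proof.
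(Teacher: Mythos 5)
Your inductive step, as written, does not close: taking $W=V_S$ alone gives $s(n,d)\leq s(2n/3,d)+2\sqrt{2}\sqrt{n}+d$, and the inductive hypothesis contributes its own $+d$, leaving you with $c\sqrt{2/3}\,\sqrt{n}+2\sqrt{2}\sqrt{n}+2d$. You correctly observe that this only beats $c\sqrt{n}+d$ when $d\leq 2\sqrt{3}\sqrt{n}$, and that the trivial $n$-pebble schedule only covers $n=\BO{d}$, so a genuine intermediate regime (roughly $d\lesssim n\lesssim d^2/12$, which is non-empty, e.g.\ $d=\Theta(\sqrt{n})$) remains uncovered. Neither of your proposed repairs is carried out, and repair (i) is doubtful: the $+d$ of Lemma~\ref{lem:challenging} accounts for pebbles held on the predecessors of a challenging vertex $w_i$ \emph{during} the execution of $\mathcal{C}'_i$, i.e.\ exactly while the recursively built schedule for $G'$ is running and may itself be holding $d$ pebbles on predecessors of its own (deeper-level) challenging vertices; these occupations overlap in time, so they cannot simply share the same $d$ memory slots.

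The paper closes the gap differently, and the mechanism is worth internalizing: it enlarges the challenging set to $W=V_S\cup Z$, where $Z=\{v\in V: d(v)\geq\sqrt{3n}\}$. Planarity gives $|E|<3|V|$, hence $|Z|<\sqrt{3n}$, so $|W|\leq 2\sqrt{2}\sqrt{n}+\sqrt{3n}$; and, crucially, after removing $Z$ the components $G'_L,G'_R$ have maximum in-degree below $\sqrt{3n}$, so the inductive hypothesis is invoked as $s(2n/3,\sqrt{3n})=c\sqrt{2n/3}+\sqrt{3n}$ rather than $s(2n/3,d)$. The recurrence becomes $s(n,d)\leq c\sqrt{2/3}\,\sqrt{n}+2\sqrt{3}\sqrt{n}+2\sqrt{2}\sqrt{n}+d$, and the $2\sqrt{3}\sqrt{n}$ of slack you identified in the constant $c$ is consumed precisely by the two $\sqrt{3n}$ terms ($|Z|$ and the degraded degree parameter), not left over to absorb $d$. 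The $+d$ then appears exactly once, at the top level of the recursion, and the bound holds uniformly in $n$ and $d$ with no case split. Your proposal has the right separator-plus-challenging-vertices skeleton and the right algebra for $c$, but it is missing this one idea -- demoting high-in-degree vertices into $W$ so that the degree parameter shrinks along the recursion -- and without it the proof is incomplete.
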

\begin{proof}
    (By induction on $n$.)  For $n=1$ the schedule that simply pebbles the only vertex satisfies the statement. 
    For $n>1$, partition $V$ into $V_L,V_R$, and $V_S$ according to the planar separator, which guarantees the properties recalled above. Also, let $Z=\{v\in V\ s.t.\ d(v)\geq \sqrt{3n}\}$ which, since $|E| < 3|V|$, satisfies $|Z| < \sqrt{3n}$. The objective is to invoke Lemma~\ref{lem:challenging}, with $W=V_S \cup Z$. To this end, we consider $G'=\left(V\setminus W,E \cap\left(\left(V\setminus W\right)\times\left(V\setminus W\right)\right)\right)$. By construction, $G'$ has two separate components $G'_L=\left(V_L, E_L\right)$ and $G'_R=\left(V_R, E_R\right)$ which are both planar and in $\mathcal{G}\left(2n/3,m', \sqrt{3n}\right)$. As the components can be pebbled one at a time, by the inductive hypothesis applied to $G'_L$ and $G'_R$, we have that $G'$ can be pebbled using at most $s\mybrace{\frac{2}{3}n,\sqrt{3n}}=6\mybrace{\sqrt{2}+\sqrt{3}}\mybrace{1+\sqrt{\frac{2}{3}}}\sqrt{\frac{2}{3}n} +\sqrt{3n}$ pebbles. 
    By Lemma~\ref{lem:challenging}, $G$ can be pebbled with a number of pebbles:
    \begin{align*}
        S &\leq s\mybrace{\frac{2}{3}n,\sqrt{3n}}+|W|+d \\&=6\mybrace{\sqrt{2}+\sqrt{3}}\mybrace{1+\sqrt{\frac{2}{3}}}\sqrt{\frac{2}{3}n} +\sqrt{3n} +2\sqrt{2}\sqrt{n} + \sqrt{3n} +d= s\left(n,d\right).
    \end{align*}
\end{proof}
Gilbert \emph{et al.}~\cite{gilbert82} proved that, for any $n$-vertex graph $G=\left(V,E\right)$ embedded on an orientable surface of genus $g$, $V$ can be partitioned into subsets $V_L,V_R$ and $V_S$ such that:
(1) There are no edges $\left(u,v\right)$ such that $u\in V_L$ and $v\in V_R$ or vice-versa;
    (2) $\mymax{|V_L|,|V_R|}\leq \frac{2}{3}n$;
    (3) $|V_S|\leq 2g\sqrt{n}+ \sqrt{8n}$; (4) DAGs $G_L=(V_L,(V_L\times V_L)\cap E)$ and $G_L=(V_R,(V_R\times V_R)\cap E)$ are planar. Using this result, Theorem~\ref{thm:planar} can be generalized to yield an $\mathcal{O}\left(g\sqrt{n}+d \right)$ upper bound on the pebbling number of any DAG with genus $g$. 

Similarly, using the separator of size $\sqrt{\Tilde{g}n}$
established by Eppstein ~\cite{eppstein2002dynamicgeneratorstopologicallyembedded} for graphs with bounded Euler genus $\Tilde{g}$, we can obtain a $\mathcal{O}\left(\sqrt{\Tilde{g}n}+d \right)$ upper bound on the pebbling number of such graphs.

\subsection{Pebbling number upper bound based on DAG topological depth}
The \emph{topological depth} $l$ of a DAG is defined as the maximum number of edges in a directed path from an input to an output vertex. As well-known~\cite{savage97models},  $l\mybrace{d-1}+1$ pebbles are sufficient to pebble any DAG with topological depth $l$ and maximum in-degree $d$. Using the approach of challenging vertices,  a result tighter than the classical one can be obtained for DAGs with $m\leq \frac{d^2\mybrace{l-1}^2}{4l}$:
\begin{theorem}\label{thm:depth}
    Let $G\in\mathcal{G}\mybrace{n,m,d}$ with topological depth $l$. There exists a pebbling schedule for $G$ using at most $2\sqrt{ml}-l+1+d$ pebbles.
\end{theorem}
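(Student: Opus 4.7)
The plan is to apply the challenging-vertices reduction of Lemma~\ref{lem:challenging} with $W$ chosen as the set of high in-degree vertices, so that the residual DAG $G'$ has a small maximum in-degree and can be efficiently pebbled by the classical topological-depth schedule (giving $l(d'-1)+1$ pebbles whenever $G'$ has maximum in-degree $d'$). First, I would fix an integer threshold $k\ge 1$, to be optimized at the end, and set $W=\{v\in V: d(v)>k\}$. Double-counting the edges incident to $W$ gives $(k+1)|W|\le \sum_{v\in W} d(v)\le m$, so $|W|\le m/(k+1)\le m/k$, and by construction every vertex of $G'=(V\setminus W,\,E\cap((V\setminus W)\times(V\setminus W)))$ has in-degree at most $k$.

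Next, since removing vertices and edges cannot lengthen any directed path, $G'$ has topological depth at most $l$. The classical bound recalled in the paragraph preceding the theorem then yields a pebbling schedule for $G'$ using $S'\le l(k-1)+1$ pebbles, which I would feed into Lemma~\ref{lem:challenging}. The resulting schedule for $G$ needs at most
\[
S\;\le\; S'+|W|+d\;\le\; l(k-1)+1+\frac{m}{k}+d \;=\; lk+\frac{m}{k}-l+1+d
\]
pebbles. It therefore remains to choose $k$ so as to minimize $lk+m/k$. By AM--GM, $lk+m/k\ge 2\sqrt{ml}$, with equality at $k=\sqrt{m/l}$, giving exactly the target bound $2\sqrt{ml}-l+1+d$.

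The only real work is in handling the integrality of $k$: I would take $k=\lceil\sqrt{m/l}\rceil$ (or $\lfloor\sqrt{m/l}\rfloor$, whichever is sharper) and verify by direct computation that the rounding slack fits within the stated bound. The regime restriction $m\le d^2(l-1)^2/(4l)$ implicit in the theorem is precisely the range in which our $2\sqrt{ml}-l+1+d$ estimate improves upon the classical $l(d-1)+1$ bound; outside that range one simply falls back on the classical schedule, so the awkward rounding cases are eliminated. No new combinatorial structure is needed beyond composing the challenging-vertices reduction with the classical depth-based schedule and performing the AM--GM optimization, which I expect to be the only non-routine step.
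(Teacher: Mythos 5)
Your proof is correct and follows essentially the same route as the paper: choose $W$ as the set of high in-degree vertices, pebble the residual DAG $G'$ with the classical depth-based schedule, and combine via Lemma~\ref{lem:challenging}; the paper simply uses the real threshold $\sqrt{m/l}$ directly (giving $|W|\le m/\sqrt{m/l}=\sqrt{ml}$ and $d'<\sqrt{m/l}$), which makes the integrality/AM--GM discussion unnecessary. Note that of your two rounding options only $k=\lfloor\sqrt{m/l}\rfloor$ with $W=\{v: d(v)>k\}$ cleanly yields the stated constant, since then $lk\le\sqrt{ml}$ and $m/(k+1)\le\sqrt{ml}$, whereas the ceiling choice leaves a small positive excess.
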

\begin{proof}
Let $W=\{v\in V\ s.t.\ d(v)\geq \sqrt{m/l}\}$. Clearly $|W|\leq m/\sqrt{m/l}\leq \sqrt{ml}$
Let $G'$ be defined as in Lemma~\ref{lem:challenging}. By construction, $G'$ has topological depth at most $l$ and  maximum in-degree $d'<\sqrt{m/l}$. Hence, the classic bound for $G'$ is  $l\mybrace{\sqrt{m/l}-1}+1$ pebbles. The statement follows from Lemma~\ref{lem:challenging}.
\end{proof}

\section{Conclusion}
In this paper, we described an efficient construction of a pebbling schedule for any DAG $G\left(V, E\right)\in\mathcal{G}\left(n,m,d\right)$ using at most $\min\{n, \BO{m/\log m + d}\}$ pebbles.

In our result, the maximum in-degree $d$ of the DAG appears as an additive term to the main $m/\log m$ component of the upper bound rather than a multiplicative term as in results previously presented in literature~\cite{loui1979minimum,hopcroft1977time}. 
As $n\leq m\leq dn$, if for a given DAG we have $m= \Theta(dn)$ our bound corresponds to the  $\BO{d \frac{n}{\log n}}$ upper bound in ~\cite{loui1979minimum,hopcroft1977time}. Note however that said bounds quickly go to $\BO{n}$ if just one vertex has an in-degree greater or equal to $\log n$, therefore losing significance. Our bound may still retain significance for some DAGs for which $m = o\left(dn\right)$, even if a limited number of vertices has in-degree greater than $\log n$.

The \emph{challenging vertices} technique also yields improved pebbling upper bounds for DAGs with bounded genus and DAGs with bounded topological depth.
 \bibliographystyle{elsarticle-num} 
 \bibliography{bibliography}

\end{document}